\newcommand{\qed}{\hfill$\Box$}
\newenvironment{proof}{\noindent {\bf Proof.}}{\qed}
\newtheorem{theorem}{Theorem}[section]
\newtheorem{lemma}{Lemma}[section]
\newtheorem{corollary}{Corollary}[section]
\begin{document}

\baselineskip 0.2in
\parskip      0.1in
\parindent    0em

\bibliographystyle{plain}

\title{{\bf Collision-free Exploration by Mobile Agents Using Pebbles}}

\author{
Sajal K. Das \footnotemark[1]
\and
Amit Kumar Dhar \footnotemark[2]
\and
Barun Gorain \footnotemark[2]
\and
Madhuri Mahawar \footnotemark[2] 
}

\date{ }
\maketitle
\def\thefootnote{\fnsymbol{footnote}}

\footnotetext[1]{
\noindent
Department of Computer Science, Missouri University of Science and Technology, Rolla, USA.  E-mail:
{\tt sdas@mst.edu}}

\footnotetext[2]{
\noindent
Department of Computer Science, Indian Institute of Technology Bhilai, Bhilai, India. E-mails:
{\tt amitkdhar@iitbhilai.ac.in}, {\tt barun@iitbhilai.ac.in}, {\tt madhurim@iitbhilai.ac.in}
}

\begin{abstract}
 In this paper, we study collision-free graph exploration in an anonymous network. The network is modeled as a graph $G=(V, E)$ where the nodes of the graph are unlabeled, and each edge incident to a node $v$ has a unique label, called the port number, in $\{0,1,\cdots, d-1\}$, where $d$ is the degree of the node $v$. Two identical mobile agents, starting from different nodes in $G$ have to explore the nodes of $G$ in such a way that for every node $v$ in $G$, at least one mobile agent visits $v$ and no two agents are in the same node in any round and stop. The time of exploration is the minimum round number by which both agents have terminated. The agents know the size of the graph but do not know its topology. If an agent arrives in the one-hop neighborhood of the other agent, both agents can detect the presence of the other agent but have no idea at which neighboring node the other agent resides. The agents may wake up in different rounds, but once awake, execute a deterministic algorithm in synchronous rounds. An agent, after waking up, has no knowledge about the wake-up time of the other agent.

The task of collision-free exploration is impossible to solve even for a line of length 2 where the agents are placed at the end nodes of the line and even if both agents wake up at the same time. We study the problem of collision-free exploration where some pebbles are placed by an Oracle at the nodes of the graph to assist the agents in achieving collision-free exploration. The Oracle knows the graph, the starting positions of the agents, and their wake-up schedule, and it places some pebbles that may be of different colors, at most one at each node. The number of different colors of the pebbles placed by the Oracle is called the {\it color index} of the corresponding pebble placement algorithm.

The central question we study in this paper is as follows.

``What is the minimum number $z$ such that there exists a collision-free exploration of a given graph with pebble placement of color index $z$?''

For general graphs, we show that it is impossible to design a deterministic algorithm that achieves collision-free exploration with color index 1. We propose an exploration algorithm with color index 3. We also proposed a polynomial exploration algorithm for bipartite graphs with color index 2.


{\bf Keywords:} deterministic algorithm, anonymous graph, exploration, mobile agent

\end{abstract}

\pagebreak

\section{Introduction}
\subsection{Background}
Graph exploration by mobile agents is an extensively studied problem. In this problem, one or multiple mobile agents visit all the nodes or edges of a graph and have to declare that the exploration is completed. Exploring all the nodes of the graph is required when searching for data stored in the nodes while traversing all the edges is required when searching for a faulty edge in the network. In single-agent graph exploration, the exploration completes when all the nodes of the graph are visited by the agent. In multi-agent graph exploration, each node of the graph is visited by at least one mobile agent. Collision-free exploration problem by multiple mobile agents is studied in \cite{CzyzowiczDGKKP17, NakaminamiMH04}, where a set of mobile agents, starting from different nodes of the network, must explore the network without any collision: no two agents appear in the same node at the same time.

\subsection{Model and Problem Definition}
The network is modeled as a graph $G=(V, E)$, where the nodes in $V$ are unlabeled, where $|V| = n$. For each node $v \in V$, the edges incident to $v$ have unique labelings, called port labels in $\{0,1,\cdots,\delta-1\}$, where $\delta$ is the degree of the node $v$. Hence, each edge has two port labels at each of its incident nodes. An agent, initially located at a node $v$ only knows the degree of $v$. From any node $v$, if an agent takes the edge with the port number $p$ and reaches an adjacent node $w$, then it learns the port $q$, through which it entered $w$, and the degree of $w$. The timeline is divided into consecutive synchronous rounds, and at most one edge can be traversed by an agent in every round. Each round is divided into two different stages. In the first stage, an agent at a node $v$ is capable of doing any amount of local computations. In the second stage, the agent moves along one of the edge incidents to $v$ or decides to stay at $v$.

Two identical mobile agents, starting from different nodes in $G$ have to explore all nodes in $G$ in such a way that every node is visited by at least one mobile agent. Two agents are said to be in a \textit{collision} if the positions of the agents in any round are the same \cite{CzyzowiczDGKKP17}. However, if the two agents traverse the same edge in the same round from opposite directions, there is no collision \cite{CzyzowiczDGKKP17}. The objective is to design an exploration algorithm such that the movement paths of the agents are collision-free. The agents know the size of the graph, but do not know the topology of the graph. If an agent arrives in the one-hop neighborhood of the other agent, both agents can detect the presence of the other agent but have no idea at which neighboring node the other agent resides. Other than this `neighborhood sensing', the agents have no other means of communication between them. The agents may wake up in different rounds, but once awake, execute a deterministic algorithm in synchronous rounds. An agent, after waking up, does not know about the wake-up time of the other agent. Hence, the agents, executing a deterministic algorithm in synchronous rounds may have different local round numbers based on their different wake-up times (local round number starts from 1). An agent, after waking up, is not allowed to collide with the other agent even before its wake-up time.

The time of exploration is the minimum local round number by which both agents have terminated, and every node has been visited by at least one agent.

There are several real-life applications where collision-avoiding path planning is necessary. The recent surge of the Covid pandemic required individuals to provide essential services without physical interactions. In distributed systems where software agents updating data in different processes may require exclusive write access for consistency of data, collision-free movements of software agents over local area networks are necessary.

The task of collision-free exploration is not possible even for a line of length 2 where the agents are placed at the end nodes of the line even if the agents wake up at the same time. This is because as the graph is anonymous and the agents are identical, both agents either move or stay in some round $r$. When both move, they move to the same node, and a collision occurs. Otherwise, the middle node remains unexplored.

In order to assist the agents in achieving collision-free exploration, an Oracle places some pebbles at the nodes of the graph. An agent can see the status of a pebble placed at a node $v$ or not only when it visits $v$. Formally, let $\mathcal{C}$ be a set of colors and let $\mathcal{G}$ be the set of all port-labeled graphs. A pebble placement algorithm is a function $\mathcal{L}:\mathcal{I} \rightarrow \left\{\mathcal{C} \cup \{0\}\right\}^{|V|}$, where $\mathcal{I}$ is an instance of type $(G,v_1,v_2)$, $G \in \mathcal{G}$ and $v_1,v_2 \in G$. A node labeled by 0 using this function means that no pebble is placed on that node. Otherwise, the label $c_i$ means that a pebble of color $i$ is placed at the node. Note that either 0 or 1 pebble can be placed at every node of the graph. The size $|\mathcal{C}|$ of $\mathcal{C}$ is called the color index of $\mathcal{L}$. Similar Oracle models are defined in \cite{GorainMNP22} in the context of mobile agents and \cite{avery16} in the context of leader election in anonymous distributed networks.

In this paper, we ask what is the minimum color index that guarantees collision-free exploration. To be specific, the central question asked in this paper is as follows. 

``What is the minimum number $z$ such that there exists a collision-free exploration algorithm of a given graph with pebble placement of color index $z$?''
\vspace{1em}

\subsection{Novel Contributions} 

We show the existence of a collision-free exploration of any graph for pebble placement of color index 3.
In the case of a bipartite graph, our proposed algorithm works in polynomial time. We also show that it is impossible to design a collision free exploration algorithm for pebble placement with color index 1.


The collision-free exploration problem has been recently studied in \cite{bhagatpelc} using a slightly different model. Here, the graph is anonymous, but the agents are assumed to have two hop visibility. Like our current paper, the authors assumed the knowledge of $n$. The assumption of two hop visibility enables the authors to avoid collision, as whenever the agents come within distance two from one another, they can see the exact path that can lead to a collision. Also, this view enables them to break the symmetry between the identical agents, as the port numbers through which the agents can enter the node where a collision may happen are different for the agents. In our problem, we only assume 1 hop sensitivity, i.e., the agents only can sense one another if they come within one distance, but do not know at exactly which neighbor the other agent is present. Also, this does not eliminate the fact that the agents may arrive at distance two from one another and then may end up exploring the same node at the same time. We eliminated this possibility by carefully placing different colored pebbles in different nodes.


\subsection{Related Work}
The problem of exploration using mobile agents in unknown environments has been studied extensively where the environment is modeled in geometric pattern \cite{Bar-EliBFY94, BlumRS97, DengKP98} or as graphs where agents move along the edges. The graphs can be defined as strongly connected directed graph \cite{BenderFRSV02, BenderS94, DengP99} or as undirected graphs \cite{DereniowskiDKPU15, DiksFKP04, DuncanKK01, FraigniaudGKP06, PanaiteP99} or as labeled graphs \cite{CzyzowiczDGKKP17, DereniowskiDKPU15, PanaiteP99} where each node of the graphs has unique labels or anonymous\cite{BenderFRSV02, BenderS94}.
Graph exploration in an undirected graph $G$ of $n$ vertices and rooted at a node $r$ is studied in \cite{DereniowskiDKPU15}. A set of $k$ agents is initially located at $r$.  The exploration is completed when every vertex is visited by an agent. Two communication models are used for exploration: (i) \textit{global communication}, where at the end of each step $s$, all agents have complete knowledge of the explored subgraph, and (ii) \textit{local communication}, where two agents can exchange information only if they occupy the same vertex.
In \cite{NakaminamiMH04}, $n$ mobile agents visit $n$ nodes of a given network repeatedly, and no two agents occupy the same node at the same time. The authors studied self-stabilizing phase-based protocol for a tree network on a synchronous model. They obtained a bound of $\mathcal{O}(\Delta n)$ for agent traversal where $n$ is the number of nodes, and $\Delta$ is the maximum degree of any vertex. The analysis in \cite{NakaminamiMH04} is generalized in \cite{CzyzowiczDGKKP17} for arbitrary graphs, and also when agents do not have maps of the network. In \cite{CzyzowiczDGKKP17}, the multi-agent graph exploration problem assumes no two agents visit the same node at the same time, all the nodes have unique identifiers, and the agents are initially located at different nodes of the graph. The exploration problem is studied for trees and general graphs for two scenarios, (i) information about the network is given to all the agents, and (ii) no information about the network is provided to any agents. In the first scenario, a map of the network is provided as \textit{prior} knowledge to the agents. In \cite{CzyzowiczDKKKNO21}, graph exploration in an edge-weighted graph is studied. There are $k$ mobile agents placed at some vertices of the graph, and two agents can be placed at the same location. In this approach, every edge of the graph must be traversed by at least one agent. Each agent $r_i$ moves with the same speed and has a specific amount of energy equal to $e_i$ for its move. An agent can move only if $e_i>0$. Energy consumed by an agent is linearly proportional to the distance traveled. An $\mathcal{O}(n+k)$ algorithm finds a set of trajectories if exploration is possible. An $\mathcal{O}(n+lk^2)$ algorithm for $n-$ node tree is given to find an exploration strategy, where $l$ is the number of leaves in the tree. In \cite{BenderFRSV02}, exploration and mapping in an unknown environment using pebbles are studied. A robot is placed at a node in an anonymous strongly connected directed graph. The task of the robot is to explore and map the graph. It is shown that when the robot has prior knowledge about network size, then with one pebble, the graph can be explored in polynomial time, and when the robot has no prior knowledge about network size, then $\Theta(\log\log n)$ pebbles are necessary and sufficient to explore the graph.

Exploration in an anonymous undirected graph using pebbles is studied in \cite{DisserHK19}, where a single agent with constant memory is placed at a node of the graph. The agent has no prior knowledge about the network size. A tight bound of $\Theta(\log\log n)$ on the number of pebbles is shown to be necessary and sufficient to explore the graph. In \cite{Shah74}, the authors showed that with five pebbles, a finite automaton can search any arbitrary maze. In \cite{BlumS77}, authors showed that using four pebbles, a finite automaton can explore finite labyrinths. In \cite{BlumK78}, authors proved that with two pebbles, a finite automaton can search all the labyrinths. In \cite{Hoffmann81}, the author showed that with one pebble, a finite automaton cannot explore all finite labyrinths.

In \cite{GorainMNP22}, the treasure hunt problem in an anonymous graph by a mobile agent is solved in $\mathcal{O}(D\log \Delta+\log^3\Delta)$ time using $O (D \log \Delta)$ pebbles, where $D$ is the distance from the initial position of the agent to the treasure. 

In \cite{BhattacharyaGM22}, the trade-off between the number of pebbles and the time required to search a treasure in an anonymous graph by a mobile agent is studied for $k=D-1$ and $k=cD$, for $c$ positive integers where $k$ and $D$ are the number of pebbles and the distance of the treasure from the initial position of the agent, respectively. In \cite{saratdecentralized}, graph exploration and mapping using two robots in an unknown environment is studied.

\section{Collision-Free Exploration in General Graph}
\subsection{Impossibility result with color index 1}
In this section, we show an impossibility result of collision-free exploration with one colored pebble. To be specific, we prove that there does not exist any algorithm that guarantees collision-free exploration of two identical mobile agents. To prove this impossibility result, we have constructed a class of graphs $\mathcal{G}$, where irrespective of the pebble placement strategy and the exploration algorithm, either there is a collision or some node remains unexplored in at least one of the graphs in $\mathcal{G}$. We start by giving the details of the construction of the class of graph.

\textbf{Construction of the class of graphs $\mathcal{G}:$}

Let $G$ be a cycle of length four with vertices $x,x',y,y'$ in the clockwise order, and the port number 1 at $x'$ and $y'$ leads to the next node in the clockwise direction, and 0 leads to the anti-clockwise direction. The edges $(x,x')$ and $(y,x')$ have port number 0 at $x$ and $y$ and the edges $(x,y')$ and $(y,y')$ have port number 1 at $x$ and $y$.

We construct 3 graphs $G_0,G_1,G_2$ from $G$ as follows.
\begin{itemize}
    \item Add an edge between nodes $x$ and $y$ with port numbers 2 at each of its end nodes.
    \item For $i=0,1,2$, the graph $G_i$ is constructed by exchanging port number 2 with port number $i$ at both nodes $x$ and $y$.
\end{itemize}

Let $\mathcal{G}=\{ G_0, G_1, G_2\}$. (See Fig \ref{fig: lower}). We consider at each of the above input graphs, the initial positions of the two agents are the nodes $x$ and $y$, respectively.
\begin{figure}
    \centering
    \begin{subfigure}{0.3\textwidth}
        \centering
        \includegraphics{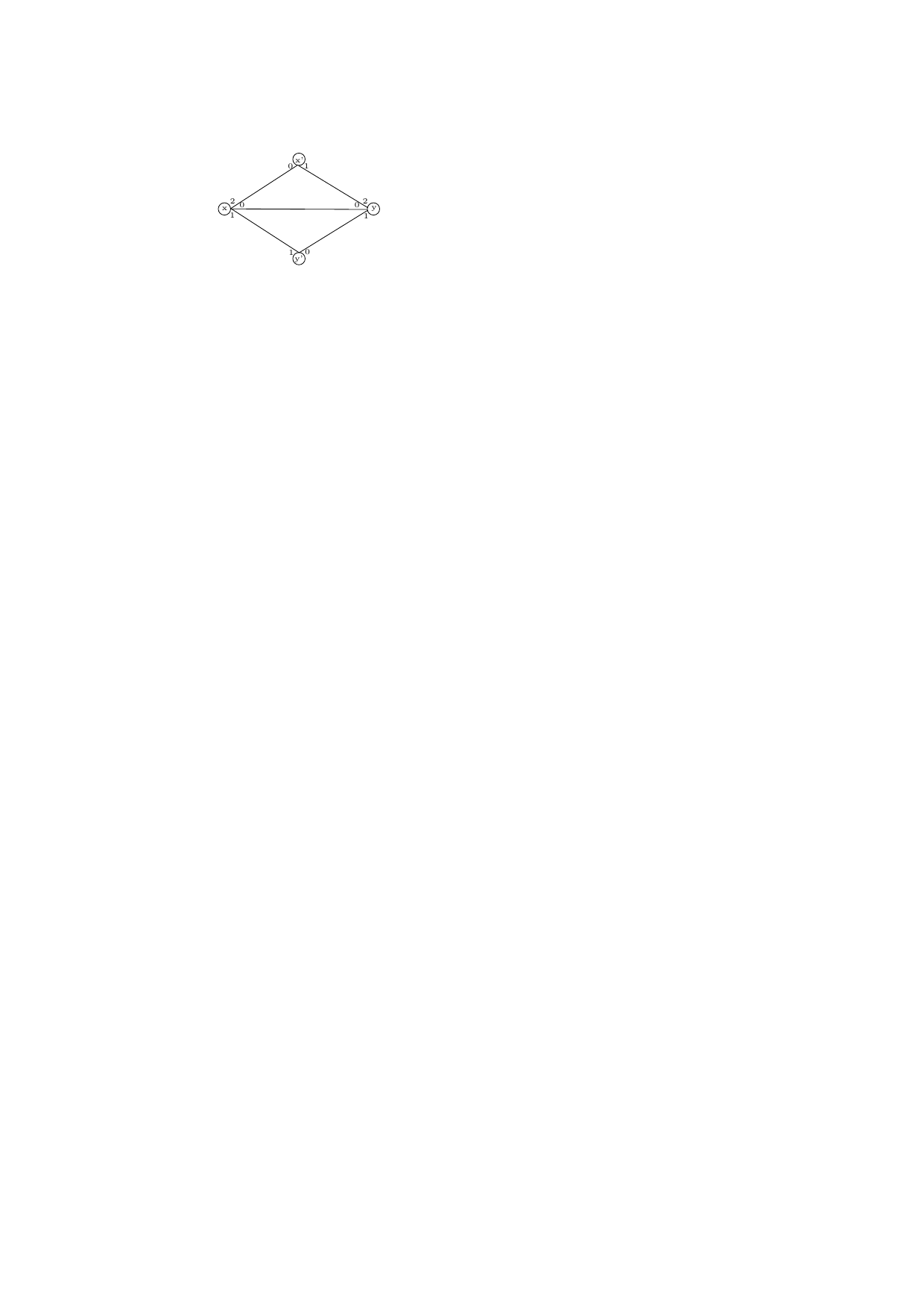}
        \caption{$G_0$}
    \end{subfigure}
    \hfill
    \begin{subfigure}{0.3\textwidth}
        \centering
        \includegraphics{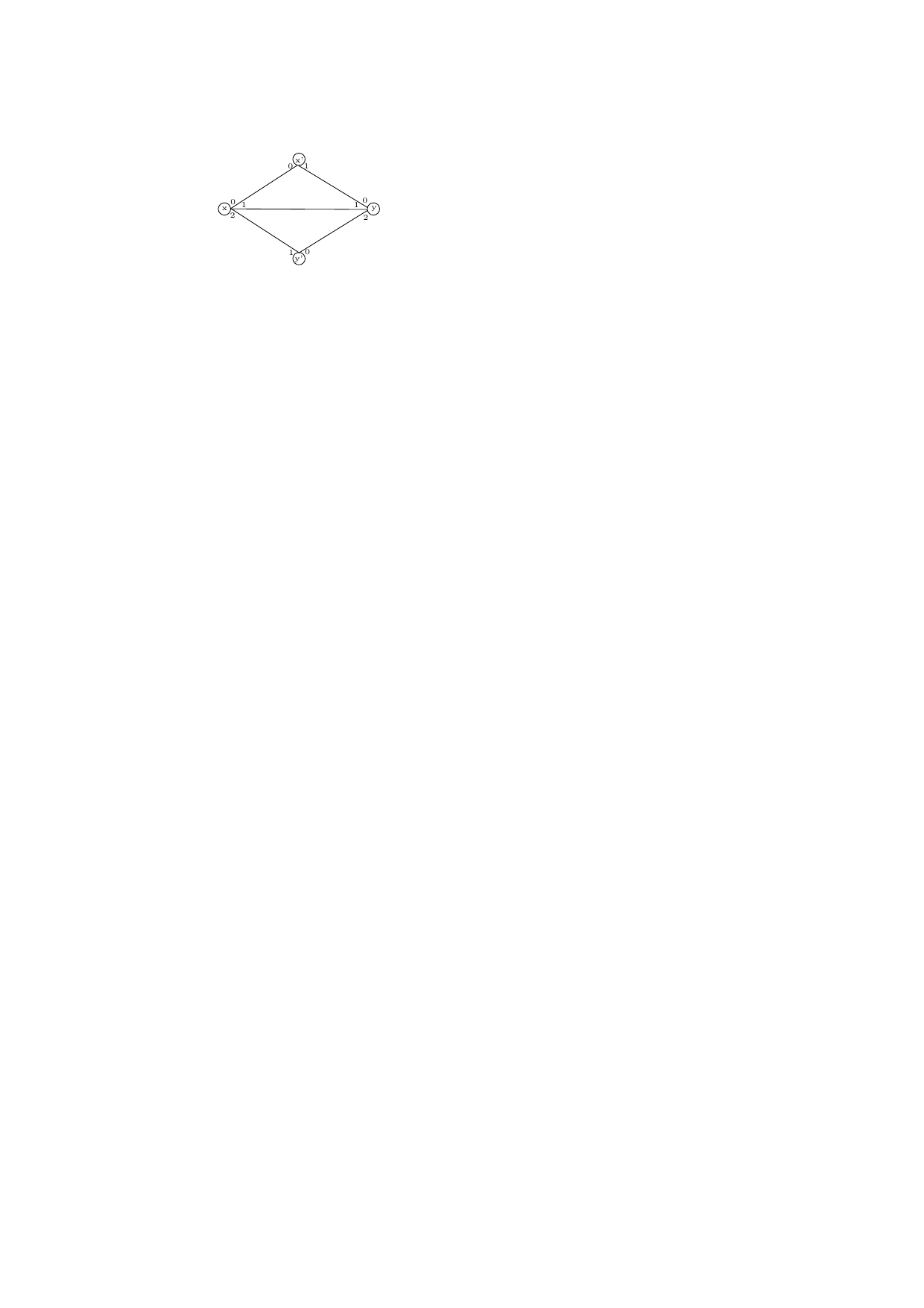}
        \caption{$G_1$}
    \end{subfigure}
    \hfill
    \begin{subfigure}{0.3\textwidth}
        \centering
        \includegraphics{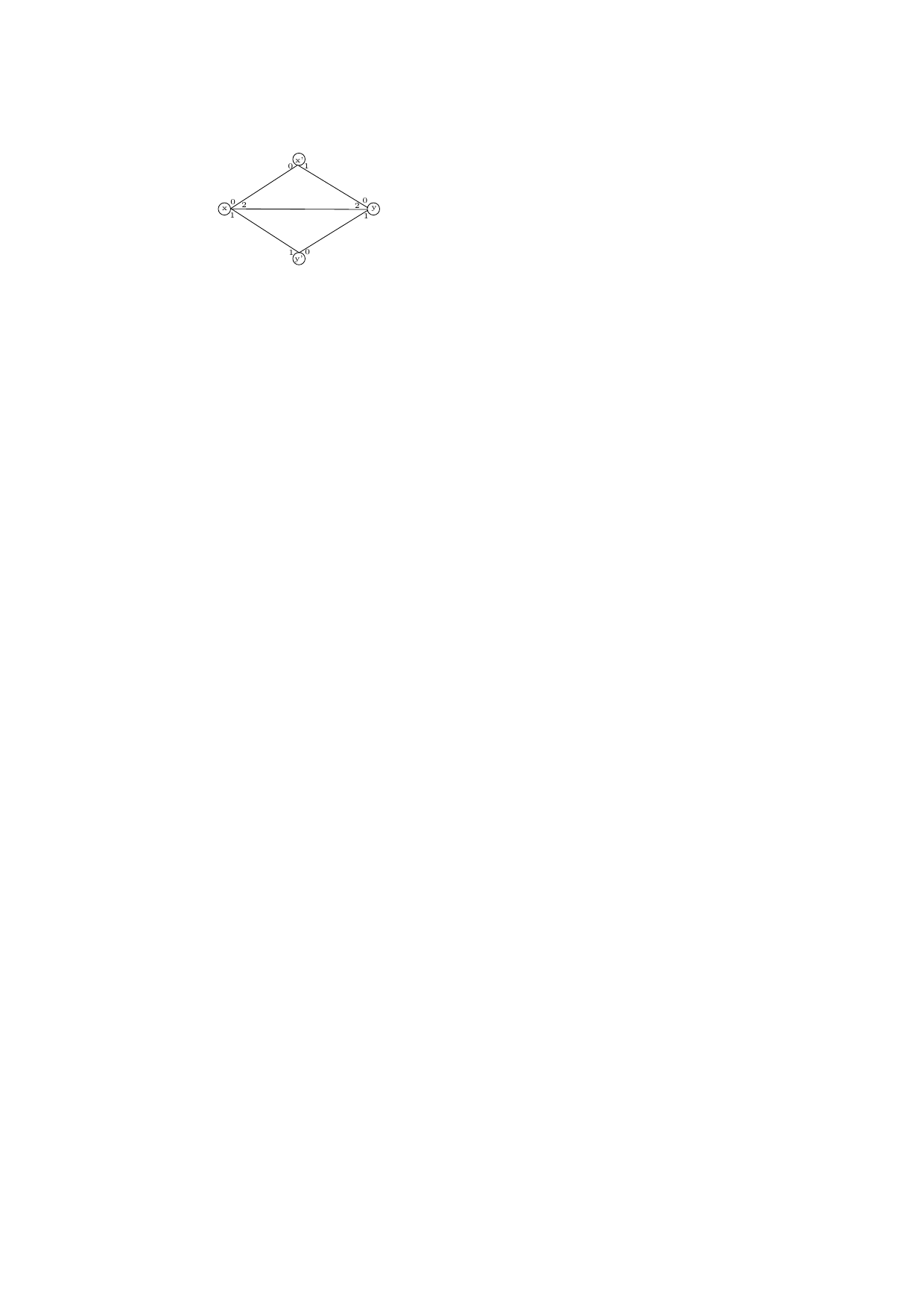}
        \caption{$G_2$}
    \end{subfigure}
    \caption{Graphs in class $\mathcal{G}$}
    \label{fig: lower}
\end{figure}

At any round, following any deterministic algorithm, an agent at the node $v$ can perform one of the following two steps: (a) stay at $v$ (b) move through an edge with the port number $ i \in \{0,1,2\}$. The desired impossibility result of this section is proved in two steps. In the first step, we show that if pebbles are placed at both $x$ and $y$ or pebbles are placed neither at $x$ nor $y$, then it is impossible to achieve collision-free explorations in the class of graphs $\mathcal{G}$. Then we show that if a pebble is placed either at $x$ or $y$, but not in both, then also collision-free exploration can not be achieved.

Let $\mathcal{A}$ be any exploration algorithm that solves collision-free exploration in the graphs in $\mathcal{G}$. Since the initial nodes of an agent have the same degrees in all the graphs in $\mathcal{G}$, after waking up, the steps performed by an agent following $\mathcal{A}$ depend only on whether a pebble is placed at its initial location or not. According to $\mathcal{A}$, based on whether a pebble is found at the initial position or not, the agent, after waking up, can take one of the following decisions:
\begin{enumerate}
    \item Stay at the current node and never move. We denote this decision by the agent as $Nevermove()$. 
    \item Stay at the current node until round $t \ge 1$ and move at round $t$ along port $i$ for $0\le i\le 2$. We denote this decision by the agent as $Move(i,t)$. 
\end{enumerate}

We show that for all possible decisions taken by the two mobile agents, there will be a graph where irrespective of the pebble placement by the Oracle, there will either be a collision or some node remains unexplored.

\begin{lemma}\label{lem:3_1}
    It is impossible to have a collision-free exploration algorithm for graphs in class $\mathcal{G}$ if pebbles are placed at both $x$ and $y$ or pebbles are placed neither at $x$ nor $y$.
\end{lemma}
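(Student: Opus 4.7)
The plan is to exploit the identical local view of the two agents at their respective starting nodes $x$ and $y$ to force symmetric behaviour, and then to show that symmetric behaviour in any $G_i$ can never explore $x'$ or $y'$ without a collision. I would fix any deterministic algorithm $\mathcal{A}$ and any pebble placement satisfying the hypothesis of the lemma, and, adversarially, let both agents wake up in the same round. In every $G_i \in \mathcal{G}$, both $x$ and $y$ have degree $3$ and, by hypothesis, carry the same pebble status, so the two identical agents have indistinguishable initial observations and, by determinism, execute the same initial decision---either $Nevermove()$ or $Move(j,t)$ for a common pair $(j,t)$.

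I then dispatch the two cases. In the $Nevermove()$ case, neither $x'$ nor $y'$ is ever visited, so exploration fails. In the $Move(j,t)$ case, I would appeal to the port structure of $G_i$ given by the construction: in $G_i$, port $i$ at both $x$ and $y$ is the port of the added edge $(x,y)$, while for each $k \in \{0,1,2\} \setminus \{i\}$, port $k$ at $x$ and port $k$ at $y$ lead to the same vertex in $\{x',y'\}$. Hence, if $j \neq i$, both agents enter the same node in $\{x',y'\}$ at round $t$, causing a collision; and if $j = i$, they traverse $(x,y)$ in opposite directions, which the model permits, and they merely swap starting positions.

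To rule out the swap scenario, I would establish by induction on the local round number the following invariant: as long as no collision has occurred and both agents are still active, they together occupy $\{x,y\}$ (one per node) with identical observation histories. The base case is immediate. For the inductive step, identical histories together with the pebble-symmetry hypothesis on $\{x,y\}$ force identical actions at the current round; a stay preserves the configuration, a move through the port of edge $(x,y)$ swaps the agents while preserving the invariant, and any other move sends both agents to the same node in $\{x',y'\}$ and collides. Since $\mathcal{A}$ must terminate, at some round the agents either halt (leaving $x'$ or $y'$ unexplored) or attempt to use a port other than the one for $(x,y)$ (causing a collision); in either case $\mathcal{A}$ fails on $G_i$, which is the desired contradiction.

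The part I expect to need the most care is the inductive invariant, specifically verifying that the one-hop neighborhood-sensing signal cannot break symmetry. Since at every round both agents sit on the pair $\{x,y\}$, each senses the other exactly when the other senses them back, and the identity of the neighbor holding the partner is indistinguishable to either agent from the local port numbering at $x$ and at $y$ (which is itself symmetric). Hence the sensing signal contributes the same observation on both sides, and the identical-history inductive step goes through unchanged.
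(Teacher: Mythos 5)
Your proof is correct, but it resolves the crucial case by a different mechanism than the paper. Both arguments start the same way: with identical pebble status at $x$ and $y$ and simultaneous wake-up, the agents' views coincide, so either both never move (leaving $x'$ and $y'$ unexplored) or both first move along the same port $j$ at the same local round; and if $j$ is not the port of the added edge $(x,y)$ in $G_i$, they enter the same vertex of $\{x',y'\}$ and collide (your observation that port $k\ne i$ leads from $x$ and from $y$ to the \emph{same} vertex is exactly right for this construction). Where you diverge is the case $j=i$, i.e.\ the first move is along $(x,y)$: the paper kills this case by switching to a \emph{staggered} wake-up schedule, so the early agent walks into the node still occupied by the late (or sleeping) agent and collides in a single step, whereas you stay with simultaneous wake-up and prove an inductive invariant that the two agents remain in lockstep on $\{x,y\}$ with identical observation histories (the swap preserves degree, entry port, pebble status, and the always-positive sensing signal), so they can only keep swapping or staying until they either collide or terminate with $x'$ and $y'$ unexplored. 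Your route buys uniformity (one adversarial schedule works for all three graphs $G_i$ at once) and is somewhat more robust in that it reasons about full observation histories rather than the paper's coarse classification of behaviour into $Nevermove()$ and $Move(i,t)$, which implicitly assumes the pre-first-move behaviour cannot depend on the sensing signal; the paper's route is shorter, dispensing with the induction by exploiting asymmetric wake-up times, which the model explicitly allows the adversary to choose. One small point to state explicitly in your write-up: if the agents never take a port other than $i$ and never terminate, the algorithm also fails (no termination and $x',y'$ never visited), so the trichotomy collide/halt-unexplored/run-forever covers everything.
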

    
\begin{proof}
    We show that in $G_0$ if pebbles are placed at both $x$ and $y$ or pebbles are placed neither at $x$ nor at $y$ then collision-free exploration is not possible to achieve. The proof for $G_1$ and $G_2$ are similar.
    
    Let $\mathcal{A}$ be any exploration algorithm. As both agents find a pebble at the initial node after waking up, or both agents see no pebble at the initial node after waking up, they make the same decisions. Based on the different possible decisions as stated above, we consider the following two cases.
    
    \begin{itemize}
        \item[Case 1] Both agents execute $Nevermove()$. In this case, both agents never leave their initial node and therefore the nodes $x'$ and $y'$ are never explored.
        \item[Case 2] Both agents execute $Move(i,t)$ for $i \in \{0,1,2\}$ and $t \ge 1$. When $i \ne 0$, if both agents wake up at the same time, then as per the construction of $G_0$ both agents taking port $i$ and move to the same node, {either $x'$ or $y'$}. This will result in a collision. When $i=0$, if the agent at $x$ wakes up at least one round before the agent at $y$, then at the local round number $t$ for the agent at $x$ moves to the node $y$ where the other agent has local round number $<t$ and therefore stays at $y$. Hence, a collision occurs at $y$. 
    \end{itemize}
\end{proof}

From lemma \ref{lem:3_1}, we can say that for every graph in class $\mathcal{G}$, at exactly one of the nodes $x$ and $y$, a pebble must be placed in order to avoid collision. The next lemma proves that even with this pebble placement, there exists at least one graph in $\mathcal{G}$, where collision-free exploration is not possible.

\begin{lemma}\label{lem:3_2}
    It is impossible to have a collision-free exploration algorithm for graphs in $\mathcal{G}$ if a pebble is placed at one node and no pebble is placed at the other node.
\end{lemma}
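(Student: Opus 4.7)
The plan is to extend the two-case analysis of Lemma~\ref{lem:3_1}. By Lemma~\ref{lem:3_1}, in every graph of $\mathcal{G}$ the Oracle must pebble exactly one of $x,y$ and leave the other unpebbled; call the placement function $\mathcal{L}$, and note that by the formal definition $\mathcal{L}$ depends only on $(G,v_1,v_2)$ and not on the wake-up schedule. The two starting nodes still have the same degree in every $G_i$, so each agent's first-move decision is determined solely by whether it sees a pebble at its starting node: denote by $D_{\text{peb}}$ the decision of the agent on the pebbled node and by $D_{\text{no}}$ the decision of the other. Each is either $Nevermove()$ or $Move(i,t)$. The proof then proceeds by case analysis on $(D_{\text{peb}},D_{\text{no}})$; in each case I exhibit a graph $G_j\in\mathcal{G}$ together with an adversarial wake-up schedule such that, no matter whether $\mathcal{L}$ pebbles $x$ or $y$ in $G_j$, a collision is forced or a node is left unexplored.

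First I would dispose of the cases in which at least one of the two decisions is $Nevermove()$. If both are $Nevermove()$, then neither $x'$ nor $y'$ is ever visited. If exactly one is $Nevermove()$---say $D_{\text{peb}}=Nevermove()$ and $D_{\text{no}}=Move(b,t_b)$, the symmetric subcase being identical---I would look at $G_b$: by construction, port $b$ at both $x$ and $y$ is the edge $(x,y)$, so whichever endpoint $\mathcal{L}$ pebbles, the agent executing $Move(b,t_b)$ walks directly into the stationary other agent, causing a collision.

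The main case is $D_{\text{peb}}=Move(a,t_a)$ and $D_{\text{no}}=Move(b,t_b)$. When $a=b$, I would target $G_j$ for any $j\neq a$ and pick wake-ups with $w_A+t_a=w_B+t_b$, so both agents perform their first move via port $a$ in the same absolute round; a direct inspection of the port labels of $G_j$ shows that port $a$ at $x$ and port $a$ at $y$ both lead to the same endpoint in $\{x',y'\}$ whenever $j\neq a$, yielding a collision irrespective of $\mathcal{L}$'s choice in $G_j$. When $a\neq b$, I would focus on $G_a$. If $\mathcal{L}$ pebbles $x$, the pebbled agent (at $x$) executes $Move(a,t_a)$ and in $G_a$ port $a$ at $x$ is the edge to $y$; choosing wake-ups with $w_A+t_a-1<w_B+t_b-1$ (e.g.\ $w_A=1$ with $w_B$ large enough) forces this arrival at $y$ to happen while the other agent is still there---asleep or waiting for its local round $t_b$---which is a collision. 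If $\mathcal{L}$ pebbles $y$ instead, a symmetric choice of wake-ups makes the $Move(a,t_a)$ agent, now at $y$, arrive at an occupied $x$ via the same edge.

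The step I expect to be the main obstacle is closing the $a\neq b$ subcase uniformly for both possible Oracle placements in $G_a$. What saves it is the port-symmetry of $G_a$: the edge $(x,y)$ carries the label $a$ at both of its endpoints, so the $Move(a,\cdot)$ decision unambiguously sends its performer into the opposite starting node, and because $\mathcal{L}$ cannot adapt to the wake-up schedule, the adversary always has the freedom to time the move so that the target starting node is still occupied upon arrival.
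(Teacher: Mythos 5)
Your proposal is correct and follows essentially the same approach as the paper's proof: a case analysis on the pair of first-move decisions (which, as you note, depend only on the pebble observation), choosing a suitable graph $G_j\in\mathcal{G}$ and an adversarial wake-up schedule (simultaneous, shifted, or keeping one agent asleep) to force either a collision along the port leading to the other agent's start node or an unexplored node. Your grouping into the subcases ``same port'' versus ``different ports'' merely reorganizes the paper's Cases 1--4 of Lemma~\ref{lem:3_2} without changing the underlying argument.
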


\begin{proof}
    We show that in at least one graph in $\mathcal{G}$, if a pebble is placed at one node and no pebble is placed at the other node, then collision-free exploration is not possible to achieve. Without loss of generality, assume that a pebble is placed at $x$ and no pebble is placed at $y$. Let $\mathcal{A}$ be any exploration algorithm. If both agents make the same decision either $Nevermove()$ or $Move(i,t)$, irrespective of the pebble placement, then by proof of lemma \ref{lem:3_1}, it is impossible to have a collision-free exploration in $\mathcal{G}$.

    We consider the following cases where the agents take different decisions for different pebble placements. 

    \begin{itemize}
        \item[Case 1]If one agent takes decision $Nevermove()$ and the other agent takes decision $Move(i,t)$. Without loss of generality, assume that the agent at $x$ takes decision $Nevermove()$ and the agent at $y$ takes decision $Move(i,t)$. In this case, in graph $G_i$, the agent at $x$ stays at its initial node, while the agent at $y$ after waking up moves in its local round $t$ to $x$ where the other agent is staying. Hence, a collision occurs at $x$.
 
        \item[Case 2]If one agent takes decision $Move(i,t)$ and the other agent takes decision $Move(j,t')$ such that $i\ne j$ and $t\ne t'$. Without loss of generality, assume that the agent at $x$ takes decision $Move(i,t)$ and the agent at $y$ takes decision $Move(j,t')$ and $t < t'$. In this case, in graph $G_i$, if both agents wake up at the same time, then when the agent at $x$ moves to node $y$ in its local round $t$ where the other agent also has local round $t<t'$ and therefore stays at $y$ in local round $t$. Hence, a collision occurs at $y$.

        \item[Case 3]If one agent takes decision $Move(i,t)$ and the other agent takes decision $Move(j,t)$ such that $i\ne j$. Without loss of generality, assume that the agent at $x$ takes decision $Move(i,t)$ and the agent at $y$ takes decision $Move(j,t)$. In this case, in graph $G_i$, if the agent at $x$ wakes up at least one round before the agent at $y$, then when the agent at $x$ moves to node $y$ in its local round $t$ where the other agent has local round $<t$ and therefore stays at $y$. Hence, a collision occurs at $y$.

        \item[Case 4] If one agent takes decision $Move(i,t)$ and the other agent takes decision $Move(i,t')$ such that $t\ne t'$. Without loss of generality, assume that the agent at $x$ takes decision $Move(i,t)$ and the agent at $y$ takes decision $Move(j,t')$ and $t < t'$. In this case, in graph $G_i$, if both agents wake up at the same time, then when the agent at $x$ moves to node $y$ in its local round $t$ where the other agent also has local round $t<t'$ and therefore stays at $y$. Hence, a collision occurs at $y$.
    \end{itemize}
\end{proof}
  
From the above lemmas, we can give the following theorem.
\begin{theorem}
    There does not exist any algorithm $\mathcal{A}$ with some pebble placement strategies using which two identical agents, starting from $x$ and $y$, respectively, explore all the nodes of each graph in $\mathcal{G}$ without any collision.
\end{theorem}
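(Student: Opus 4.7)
The plan is to combine Lemmas \ref{lem:3_1} and \ref{lem:3_2} through an exhaustive case analysis over the possible pebble placements at the starting nodes $x$ and $y$. First I would observe that any pebble-placement function $\mathcal{L}$ restricted to the two starting nodes falls into exactly one of three mutually exclusive and exhaustive situations: (i) both $x$ and $y$ carry a pebble, (ii) neither carries a pebble, or (iii) exactly one of $x,y$ carries a pebble. Lemma \ref{lem:3_1} rules out (i) and (ii) for $G_0$, while Lemma \ref{lem:3_2} rules out (iii) by exhibiting some $G_i \in \mathcal{G}$ on which exploration fails. Taking the contrapositive, any algorithm $\mathcal{A}$ paired with any pebble placement must fall under one of these cases on at least one graph in $\mathcal{G}$, and hence fail.

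The key justification I would spell out before invoking the lemmas is why pebbles at the nodes $x'$ and $y'$ (which the Oracle is free to place or omit) do not affect the argument. After waking up, an agent's view consists solely of the degree of its current node and the pebble status of its current node; in particular the two starting nodes have the same degree in every graph of $\mathcal{G}$. Thus the very first decision taken by an agent, which in the taxonomy of the preceding subsection is either $Nevermove()$ or $Move(i,t)$, is a deterministic function only of the pebble status at its starting node. Since the arguments in both lemmas derive a collision or an unexplored node strictly from these first moves (before any agent has a chance to inspect $x'$ or $y'$), additional pebbles placed at $x'$ or $y'$ cannot alter the conclusion.

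With these two ingredients, the proof of the theorem is essentially a one-line composition: for any algorithm $\mathcal{A}$ and any pebble placement $\mathcal{L}$, partition according to the restriction of $\mathcal{L}$ to $\{x,y\}$ and apply the appropriate lemma. I do not foresee a substantive obstacle here; the only care required is to emphasize that the impossibility is against the adversarial choice by the Oracle of which graph $G_i$ and which wake-up schedule to use, matching the quantification in the lemmas (in particular Case 2 of Lemma \ref{lem:3_1} and Cases 2--4 of Lemma \ref{lem:3_2} select a specific $G_i$ and relative wake-up delay that forces the failure).
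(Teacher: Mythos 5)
Your proposal is correct and matches the paper's argument: the theorem is obtained exactly by partitioning the pebble placements at $x$ and $y$ into the both/neither case (Lemma \ref{lem:3_1}) and the exactly-one case (Lemma \ref{lem:3_2}) and composing the two lemmas. Your added remark that pebbles at $x'$ and $y'$ are irrelevant, because each agent's first (and decisive) action depends only on the pebble status and degree of its own starting node, is a sound elaboration of an assumption the paper makes implicitly rather than a different route.
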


\subsection{Algorithm for color index 3}\label{sec:4}

In this section, we provide an Algorithm for collision-free exploration that works for general graphs using pebble placement with color index 3.

\subsubsection{Preliminaries}

A breadth first search  (BFS) based traversal sequence (of infinite length) starting from a  node $v$ visits all the sequence of port numbers of length $i$ in the lexicographical order, for $i =1,2, \cdots$. We call this infinite sequence $AnonymousBFS(v)$. Consider the sequence $AnonymousBFS(x)$. Let $(f_1,f_2,\cdots,f_n)$ be the permutation on the set $V$ where $f_i \in V$ is the $i$-th node which is visited for the first time following $AnonymousBFS(x)$ starting from the node $x$. To be specific, let $S'$ be the minimum length subsequence of $AnonymousBFS(x)$ that visited $f_i$ starting from $x$. Then, starting from $x$, there exists exactly $n-i$ nodes in $V$ none of which is visited by $S'$. Define $z_x=f_n$. Let $y \ne x$ be any node in $V$.
 Let $P$ be the shortest path from $y$ to $z_x$ in $G$. Let $V_1=\{v_0(=y),v_1,\cdots,v_\ell (=z_x)\}$ be the set of nodes in path $P$. Let $V_2 = V \backslash V_1$ be the set of nodes in $G$ that are not in $P$. Let $p_0,p_1,\cdots,p_{\ell -1}$ be the outgoing port label on path $P$ such that node $v_j$ can be reached from node $v_{j-1}$ using port $p_{j-1}$. Let $P_1$ be the shortest path from $y$ to $x$ in $G$. Let $W_1=\{w_0(=y),w_1,\cdots,w_{\ell 1}(=x)\}$ be the set of nodes in path $P_1$. Let $V_3 = (V_2 \backslash W_1 )\cup \{x\}$ be the set of nodes in $G$ that are neither in $P$ nor in $P_1$. Define $z'_x$ as the last node in the ordered set $\{f_1, \cdots f_n\}\cap V_3$. Intuitively, $z'_x$ is the node in $V_3$ that is visited last by $AnonymousBFS(x)$ among all other nodes in $V_3$.

\subsubsection{High-level idea} 

Before describing the details of the pebble placement strategy and the corresponding exploration algorithm by the mobile agents, we give the high level idea of the overall algorithm.  In the high level idea, we described different phases of the overall algorithm and how to achieve them. However, while describing the details later, we have not mentioned these phases separately. The algorithm is presented using several subroutines which are executed by the agents based on the different colored pebbles they found in different stages of the algorithm. 

The overall goal is achieved in four different phases. The objectives of the phases are described below. 
\begin{itemize}
    \item[Phase 1] After waking up, both agents arrive at a distance 1 without any collision. Let these positions be $a_1$ and $a_2$. 
    \item[Phase 2] Each agent identifies whether the other agent is awake.
    \item[Phase 3] Agent at $a_2$ moves to the last visited node of the BFS traversal from $a_1$.
    \item[Phase 4] The agent at $a_1$ continues BFS traversal until it reaches the second last visited node in the BFS tree originating from $a_1$. 
\end{itemize}
 
If the first three steps can be achieved without collision, and step 4 is done only after the execution of step 3 is completed, then exploration of all the nodes of the graph can be guaranteed without any collision. 

In order to execute Phase 1, we identify one of the agents as the initiator which moves towards the other agent until they are one distance apart. During this activity, the other agent remains at its initial position irrespective of whether it has woken up or not. We identify this initiator agent based on their initial positions as follows.

Suppose that agents are initially placed at two different nodes $x'$ and $y'$ of the graph.  If $x'$ is present in all the shortest paths from $y'$ to $z_{x'}$, then $y'$ is called as $x$ and $x'$ is called as $y$. Otherwise, $x'$ is called $x$, and $y'$ is called $y$. We call the agent at $x$ as $A_1$ and the agent at $y$ as $A_2$. The agent $A_2$ is identified as the initiator and moves in Phase 1.

Since the agents are identical, we place a pebble of color red or blue (which color pebble is placed will be described later) at $y$ and no pebble at $x$. The agent, after waking up, sees if there is a pebble placed at its position and learns whether it is $A_1$ or $A_2$. To guide the agent $A_2$ towards $A_1$, some pebbles are placed at the nodes in the shortest path from $y$ to $x$. The agent $A_2$, after waking up, if it learns the presence of no agent at distance 1, explores all its neighbors and moves to the neighbor that contains a pebble. Continuing in this approach it reaches one step closer to $A_1$ and stops once it can sense the presence of $A_1$ for the first time. The agent $A_1$ remains at its initial position during the above activity until it identifies the presence of the other agent at distance 1.

To execute Phase 2, $A_2$ moves to a node that is at distance 2 from $A_1$. It waits for some time at this node and if $A_1$ does not arrive in its neighbor within this time, then $A_2$ again visits the node at distance 1 from $A_1$ and moves to a node that is at distance 2 from $A_1$ and waits there. $A_2$ continues this process until $A_1$ arrives at a node 1 distance apart from the node where $A_2$ waits and $A_2$ identifies that $A_1$ is awake. On the other hand, the agent $A_1$, after waking up, may learn the presence of the other agent at distance 1, but $A_2$ may be in sleep in case the initial distance between the agents is exactly 1. In this case, the agent $A_1$ remains at its position until $A_2$ leaves $A_1$'s neighborhood. As soon as this happens, $A_1$ knows that $A_2$ is awake. 

Once both agents identify that the other agent is awake, $A_1$ moves to node $x$ and waits there, while $A_2$ moves back to node $y$ and starts executing phase 3.

To execute Phase 3, we place blue/red pebbles at the nodes in the shortest path from $y$ to $z_x$. $A_2$ explores its neighbors and moves to a node with a red/blue pebble. $A_2$ continues this process until it finds a node with a red/blue pebble and none of its neighbors has any red/blue pebble. As per our convention, there exists a shortest path from $y$ to $z_x$ which does not contain $x$, where $A_1$ is present. This fact along with the placement of blue/red pebbles helps in completing Phase 3 in a collision-free manner. 

To execute Phase 4, a green pebble is placed at $z'_x$. $A_1$ waits at $x$ for $A_2$ to complete Phase 3 and then executes $AnonymousBFS(x)$ until it reaches the node with a green pebble. All the nodes in the path from $y$ to $z_x$ and in the paths from $x$ to $y$ are already visited by the two agents when Phase 3 is completed. As per the definition of $z'_x$, by the time the agent following the sequence of port numbers $AnonymousBFS(x)$ visits $z'_x$, all the nodes in $G$ that are not visited till Phase 3 are already visited. Hence, when $A_1$ visit the node with green pebble and stop, all the nodes in $G$ are explored.

Following are the challenges in implementing the above idea.
\begin{enumerate}
    \item How the agents know when a particular phase is ended. This knowledge is crucial as collision-free movements of the agents can be guaranteed if the activities of phase i start only after all the activities of phase i-1 are complete.
    \item How do the agents determine which agent initiates phase 1?
\end{enumerate}

The different colored pebbles and their placements help us to overcome the above challenge. The pebble placement algorithm uses three different color pebbles - \textit{green, red} and \textit{blue}. In the below subsection, we give the pebble placement strategy to overcome these challenges. Intuitively, the role of a pebble of each color is described below.

\textcolor{black}{The green color pebble at any node indicates that this is the final node where it has to move and terminate the algorithm. The red and blue color pebble is used by the agent to find a non-colliding shortest path to $z_x$. In order to guide the agent through the shortest path the red and blue color pebbles also indicate the order in which the neighbors should be explored so that collision is avoided. Red and blue color pebbles also identify which of the agents will start executing Phase 1. A combination of red/blue pebbles and conditions such as the degree of node, other agent in the neighborhood is used to avoid collision in corner cases. For example, if any of the starting nodes have degree $n-1$, then the red color pebble at different nodes indicates which port to skip to complete exploration without collision. }

\begin{figure}
    \centering
    \includegraphics[scale=1.0]{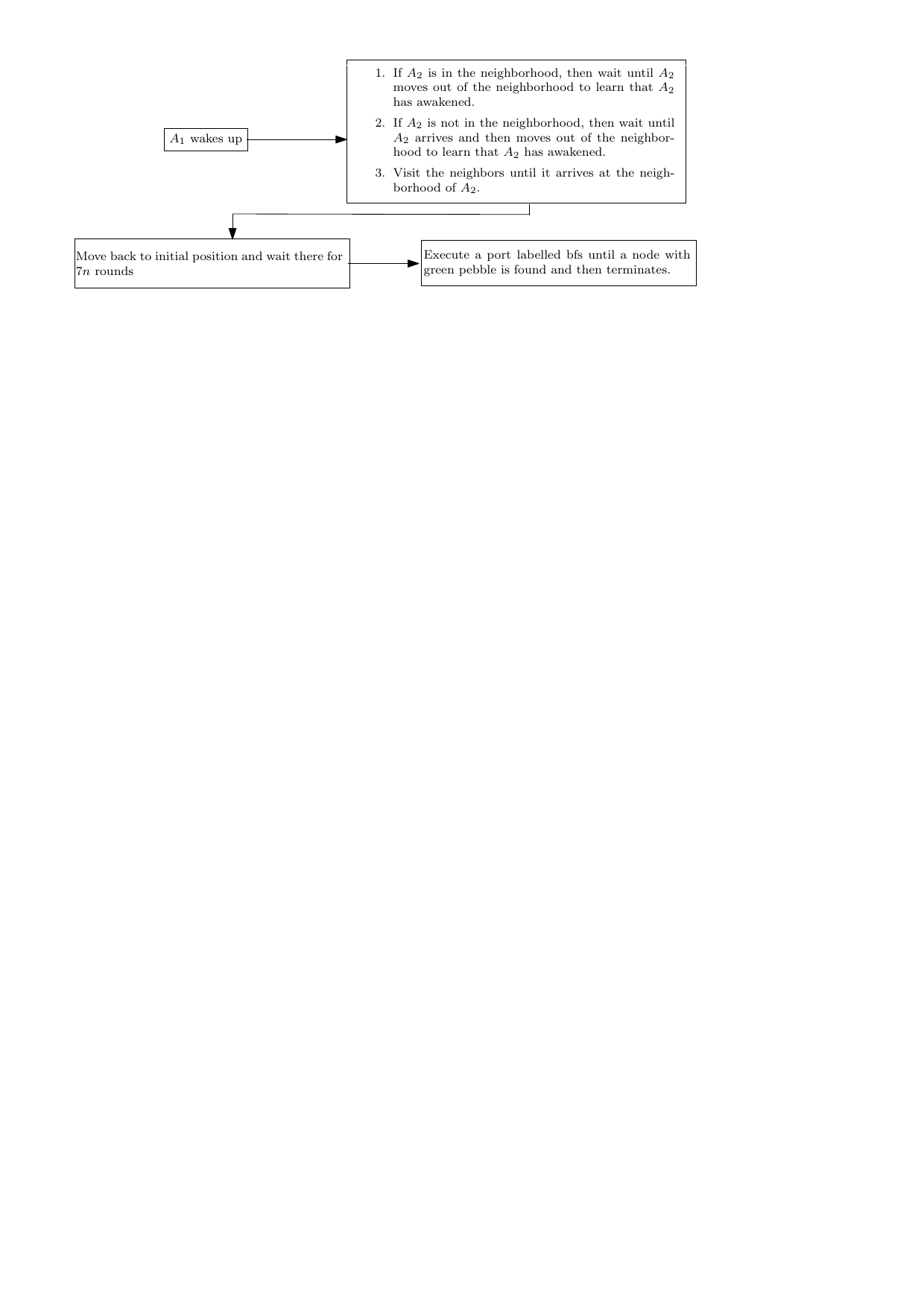}
    \caption{Flowchart showing the high level algorithm executed by agent $A_1$}
    \label{fig:flowchart1}
\end{figure}
\begin{figure}
    \centering
    \includegraphics[scale=1.0]{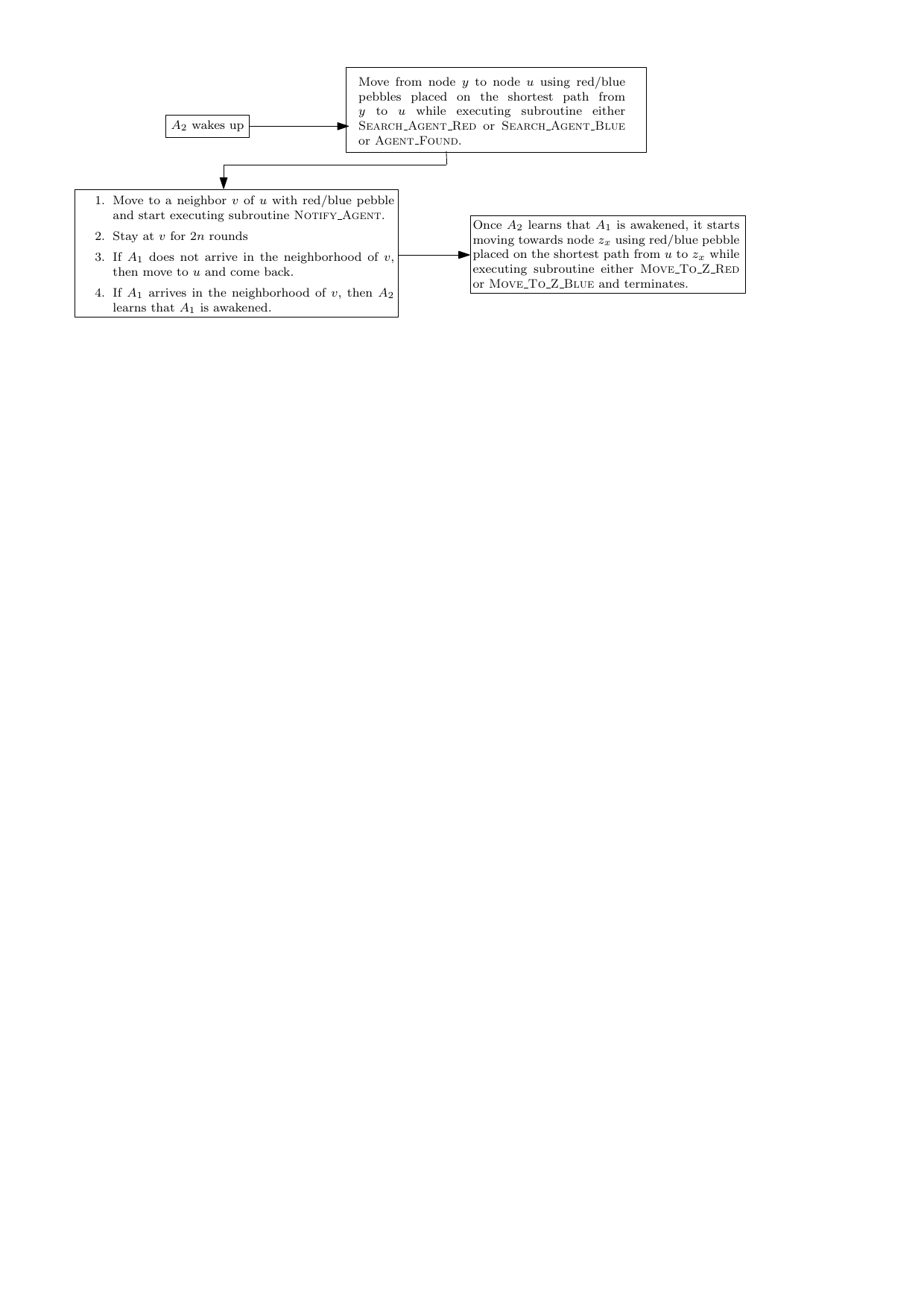}
    \caption{Flowchart showing the high level algorithm executed by agent $A_2$}
    \label{fig:flowchart2}
\end{figure}

Figure \ref{fig:flowchart1}, and Figure \ref{fig:flowchart2} show the flowchart of the proposed exploration algorithm indicating the major steps performed by the agents $A_1$ and $A_2$, respectively.

\subsubsection{Pebble Placement}\label{sec:4_1}

Suppose agents are initially placed at two different nodes $x'$ and $y'$ of the graph. Below we mention how to decide which agent moves in the first phase and which agent moves in the second phase.
\begin{itemize}
    \item If $x'$ is present on every shortest path from $y'$ to $z_{x'}$, then set $x=y'$ and $y=x'$. Else set $x=x'$ and $y=y'$.
\end{itemize}
The agent at $y$ executes phases 1, 2, and 3 and the agent at $x$ executes phases 2 and 4.


The pebbles are placed at different nodes of the graph as follows.

\begin{enumerate}
    \item If $deg(y) = n-1$, then a green colored pebble is placed at node $x$. Let $p$ be the port label at $y$ on edge $(x,y)$ and $y_i$ be the node that can be reached from $y$ using port $i$. The remaining pebbles are placed based on the following cases.
    \begin{itemize}
        \item If $p = 0$, then a red colored pebble is placed at node $y$ and a green colored pebble at node $y_{n-1}$.
        \item If $p = n-1$, then no pebble is placed at node $y$ and green colored pebble is placed at node $y_{n-2}$.
        \item If $ 0 < p < n-1$, then no pebble is placed at the node $y$ and a red colored pebble is placed at node $y_{p-1}$ and a green colored pebble at node $y_{n-1}$.
    \end{itemize}

    \item Suppose that $deg(y) \ne n-1$ and $deg(x) = n-1$, then a green colored pebble is placed at node $y$. Let $q$ be the port label at $x$ on edge $(x,y)$ and $x_i$ be the node that can be reached from $x$ using port $i$. Consider the following cases.
    \begin{itemize}
        \item If $q = 0$, then a red colored pebble is placed at node $x$ and a green colored pebble is placed at node $x_{n-1}$.
        \item If $q = n-1$, then no pebble at node $x$ and green colored pebble is placed at node $x_{n-2}$.
        \item If $ 0 < q < n-1$, then no pebble is placed at node $x$ and a red colored pebble is placed at node $x_{q-1}$ and a green colored pebble at node $x_{n-1}$.
    \end{itemize}

    \item If $deg(y) \ne n-1$ and $deg(x) \ne n-1$. Consider the following cases.
    \begin{enumerate}
        \item If $y=z_x$, then place a green-colored pebble at $z'_x$, and place a red-colored pebble at every node on $P_1\setminus \{x\}$.
        \item If $y \ne z_x$. Consider the following cases.
        \begin{enumerate}
            \item If $x$ is a neighbor of any node $v$ in ${P}$, then a red pebble is placed at node $z_x$. Consider any node $v_i\in V_1\setminus\{z_x\}$. If $v_i$ is not a neighbor of node $x$, then a blue pebble is placed at $v_i$. If $v_i$ is a neighbor of $x$ and the port label at $v_i$ on edge from $v_i$ to $v_{i+1}$ is smaller than the port label at $v_i$ on edge from $v_i$ to $x$, then a blue pebble is placed at $v_i$. Otherwise, a red pebble is placed at $v_i$. A green-colored pebble is placed at $z'_x$.
            \item Suppose that $x$ is not a neighbor of any node $v$ in ${P}$. Consider the following cases.
            \begin{enumerate}
                \item If ${P}\cap {P}_1=y$, then a red color pebble is placed at every node on ${P}_1\setminus \{x\}$ and a blue color pebble is placed at every node on ${P}\setminus \{y\}$. A green-colored pebble is placed at $z'_x$.
                \item If $|{P}\cap{P}_1|>1$. Let $U$ be the set of nodes that belong to both ${P}$ and ${P}_1$. Let $u\in U$ be a node such that $d(y,u) = \max\limits_{v\in U}\{ d(y,v) \}$. Let $P_{yu} = \{y, \cdots, u\}$ be the shortest path from $y$ to $u$. Let $P_{ux} = \{u, \cdots, x\}$ be the shortest path from $u$ to $x$. Let $P_{uz_x} = \{u, \cdots, z_x\}$ be the shortest path from $u$ to $z_x$. Then blue-colored pebble is placed at every node on $P_{yu}\setminus\{u\}$ and at every node on $P_{uz_x}\setminus \{u\}$. Red colored pebble is placed at every node on $P_{ux}\setminus \{x\}$. A green-colored pebble is placed at $z'_x$.
            \end{enumerate}
        \end{enumerate}
    \end{enumerate}
\end{enumerate}

If the two agents are not initially placed at the adjacent nodes, then the agents identify the initiator of phase 1 based on the degree of the nodes where they are initially placed, and whether there is a pebble at the initial position of the agents. If the degree of the initial node is at most $n-2$ and the agent does not find any pebble at its initial position, then it knows that the other agent will execute phase 1 and waits at its position for the other agent to complete phase 1. If the degree of the initial node is at most $n-2$ and the agent finds a red or blue-colored pebble at its initial position, then it knows that it will execute phase 1. Once both agents are at adjacent nodes at the end of phase 1, they start executing phase 2 to inform the other agent of its awakened state. 

When both agents are informed of their awakened state, the agent that has initiated Phase 1, executes Phase 3. While the other agent stays at its initial position and waits for the other agent to complete phase 3. The waiting agent waits for $7n$ rounds for the other agent to finish phase 3 before starting phase 4. Once the agent starts executing phase 4, it continues exploring using BFS traversal until it visits a node with a green pebble.

\subsubsection{The Exploration Algorithm}\label{sec:4_2}

The main idea behind the algorithm executed by the agents is to move an agent to a specific node in the graph without visiting the node where another agent is present. The agents follow Algorithm \ref{alg: init} and based on the presence of the pebble at its starting node $s$, it decides to start moving immediately or wait for some time.
According to the Algorithm \ref{alg: init}, if a green-color pebble is placed at $s$, then the agent stays at its current position and terminates. If no pebble is placed at $s$ and degree of $s$ is $n-1$, then the agent executes subroutine \textsc{One\_Level\_BFS}, while if no pebble is placed at $s$ and degree of $s$ is at most $n-2$, then agent executes subroutine \textsc{PortLabeledBFS}. If the degree of $s$ is $n-1$ and a red-color pebble is placed at $s$, then the agent executes subroutine \textsc{One\_Level\_BFS\_Color}. If the degree of $s$ is at most $n-2$, has a pebble (red or blue color) and the other agent is in the neighborhood of $s$, then it executes subroutine \textsc{Agent\_Found}. If the degree of $s$ is at most $n-2$ and the other agent is not in the neighborhood of $s$, then depending on the color of the pebble at $s$, it executes subroutine \textsc{Search\_Agent\_Red} if $s$ has red color pebble or subroutine \textsc{Search\_Agent\_Blue} if $s$ has a blue color pebble.

\begin{algorithm}
    \caption{\textsc{Initiate}($n$)}
    \label{alg: init}
     {
    {Let $s$ be the initial position of the agent.}\\
    \If{color of pebble at $s$ = green}
    {\label{init:1}
        {Stay at $s$ and terminate.}
    }\label{init:2}
    \ElseIf{$deg(s)=n-1$}
    {\label{init:3}
        \If{$s$ has a red color pebble}
        {
            {\textsc{One\_Level\_BFS\_Color}($s$)} 
        }
        \Else
        {
            {\textsc{One\_Level\_BFS}($s$,$0$)}
        }
    }\label{init:4}
    \Else
    {\label{init:5}
        \If{$s$ has a red or blue pebble}
        {\label{init:6}
            $color =$ color of pebble at $s$\\
            \If{another agent is in the neighborhood}
            {\label{init:7}
                {\textsc{Agent\_Found}($s,color,-1$)}\label{initline1}
            }\label{init:8}
            \Else
            {
                \If{$color = $ red}
                {\label{init:9}
                    $stack = []$\\
                    push$(stack,-1)$\\
                    {\textsc{Search\_Agent\_Red}($s,stack,-1$)}\label{initline2}
                }\label{init:10}
                \ElseIf{$color =$ blue}
                {\label{init:11}
                    {\textsc{Search\_Agent\_Blue}($s,-1$)}\label{initline3}
                }\label{init:12}
            }\label{init:13}
        }\label{init:14}
        \Else
        {\label{init:15}
            {\textsc{PortLabeledBFS}($s$)}
        }\label{init:16}
    }
    }
\end{algorithm}

Subroutine \textsc{One\_Level\_BFS} is executed when no pebble is found at $s$ and the degree of $s$ is $n-1$. According to this subroutine, the agent visits the neighbor of $s$ until it finds a node $u$ with a pebble. If the agent finds a red pebble at $u$, then it knows that the other agent is in the next node $u'$ and hence skips the port leading to node $u'$ and visits the node after $u'$. If the agent finds a green pebble at $u$, then it finds its final position and terminates there. If the agent finds a green pebble without finding a red pebble, then the agent knows that the other agent is in the node that can be reached from $s$ using port $n-1$.

\begin{algorithm}
    \caption{\textsc{One\_Level\_BFS}($s$,$p$)}
    \label{alg:onelevelbfs}
     {
    Move to adjacent node $u$ using port $p$.\\
    \If{$u$ has green pebble}
    {\label{bfslevel:1}
        Stay at $u$ and terminate.
    }\label{bfslevel:2}
    \ElseIf{$u$ has red pebble}
    {\label{bfslevel:3}
        $p = p+2$\\ \label{bfslevel:4}
        Move back to $s$.\\
        \textsc{One\_Level\_BFS}($s$,$p$)
    }\label{bfslevel:5}
    \Else
    {\label{bfslevel:6}
        $p = p+1$\\ \label{bfslevel:7}
        Move back to $s$.\\
        \textsc{One\_Level\_BFS}($s$,$p$)
    }\label{bfslevel:8}
    }
    \end{algorithm}

Subroutine \textsc{One\_Level\_BFS\_Color} is executed when a red-colored pebble is found at $s$ and the degree of $s$ is $n-1$. According to this subroutine, the agent visits the neighbor of $s$ until it finds a node $u$ with a green color pebble. If the color of the pebble at $s$ is red, then the agent knows that the other agent is at the node that can be reached using port 0. Hence, the agent starts traversing unexplored edges in increasing order of port label starting from 1.

\begin{algorithm}
    \caption{\textsc{One\_Level\_BFS\_Color}($s$)}
    \label{alg:onelevelbfscolor}
     {
    $p = 0$; \\ \label{bfscolor:1}
    \While{$p \le deg(s)-1$} 
    {\label{bfscolor:3}
        $p = p+1$;\\ \label{bfscolor:2}
        Move to adjacent node $u$ using port $p$.\\
        \If{$u$ has a green colored pebble}
        {
            Stay at $u$ and terminate.
        }
        \Else
        {
            Move back to node $s$.
        }
    }\label{bfscolor:4}
    }
\end{algorithm}

\begin{figure*}
     \centering
     \begin{subfigure}[b]{0.4\textwidth}
         \centering
         \includegraphics[width=.5\textwidth]{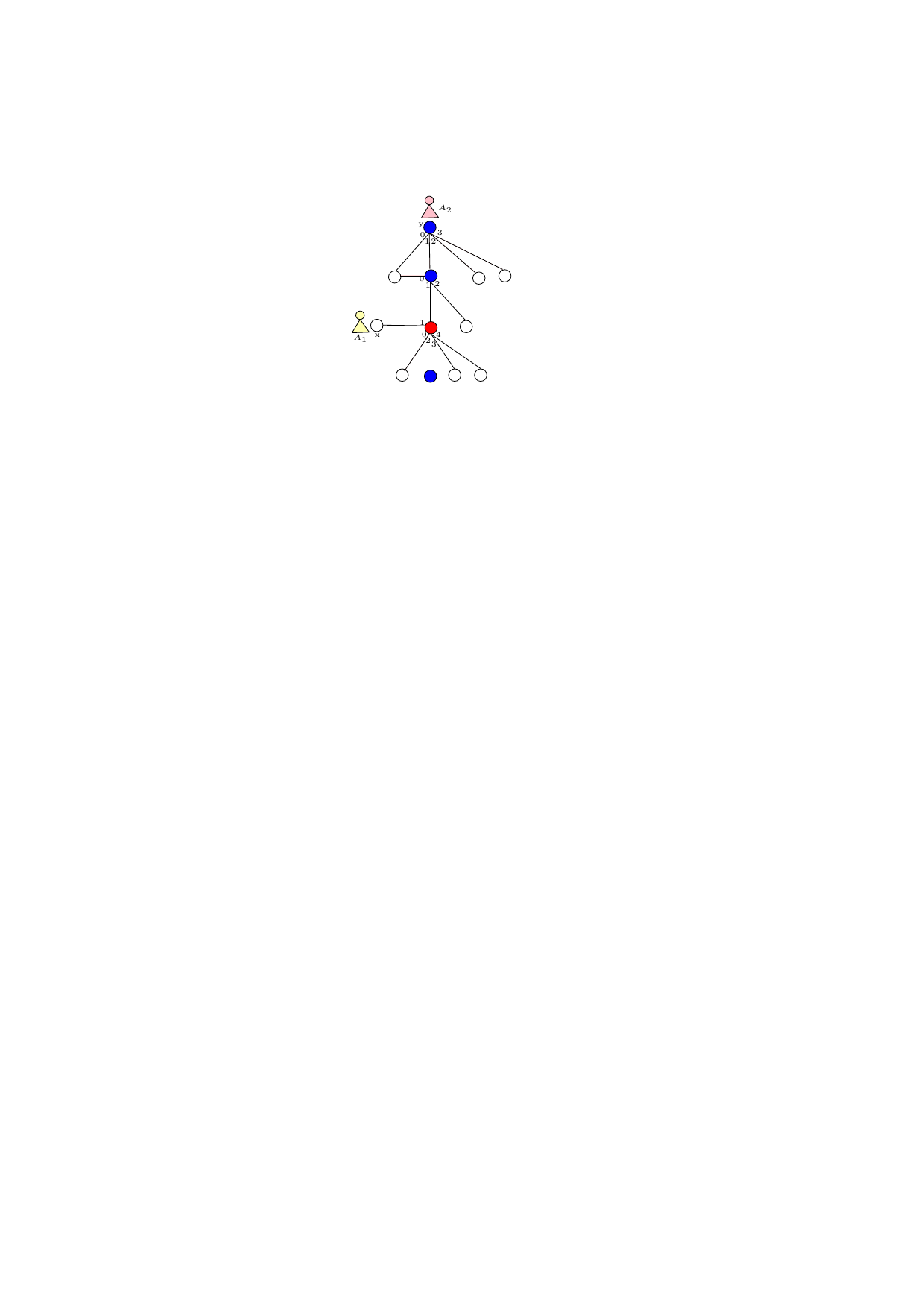}
         \caption{Initial Positions of the agents where $A_1$ is sleeping while $A_2$ is awake}
         \label{fig:a}
     \end{subfigure}
     \hfill
     \begin{subfigure}[b]{0.4\textwidth}
         \centering
         \includegraphics[width=.5\textwidth]{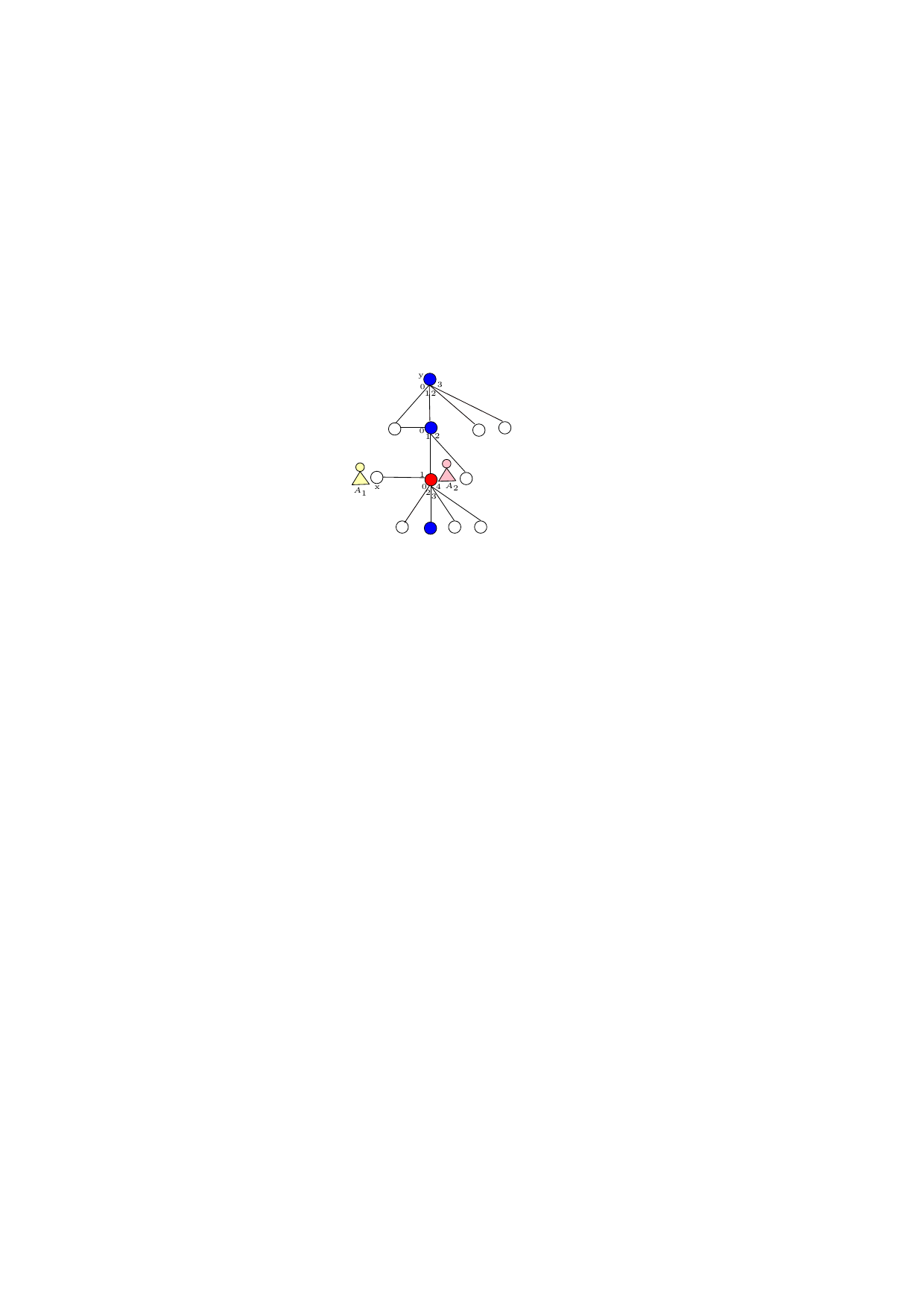}
         \caption{$A_2$ moves to $u$ using subroutine \textsc{Search\_Agent\_Blue} while $A_1$ is sleeping}
         \label{fig:b}
     \end{subfigure}
     \hfill
     \begin{subfigure}[b]{0.4\textwidth}
         \centering
         \includegraphics[width=.5\textwidth]{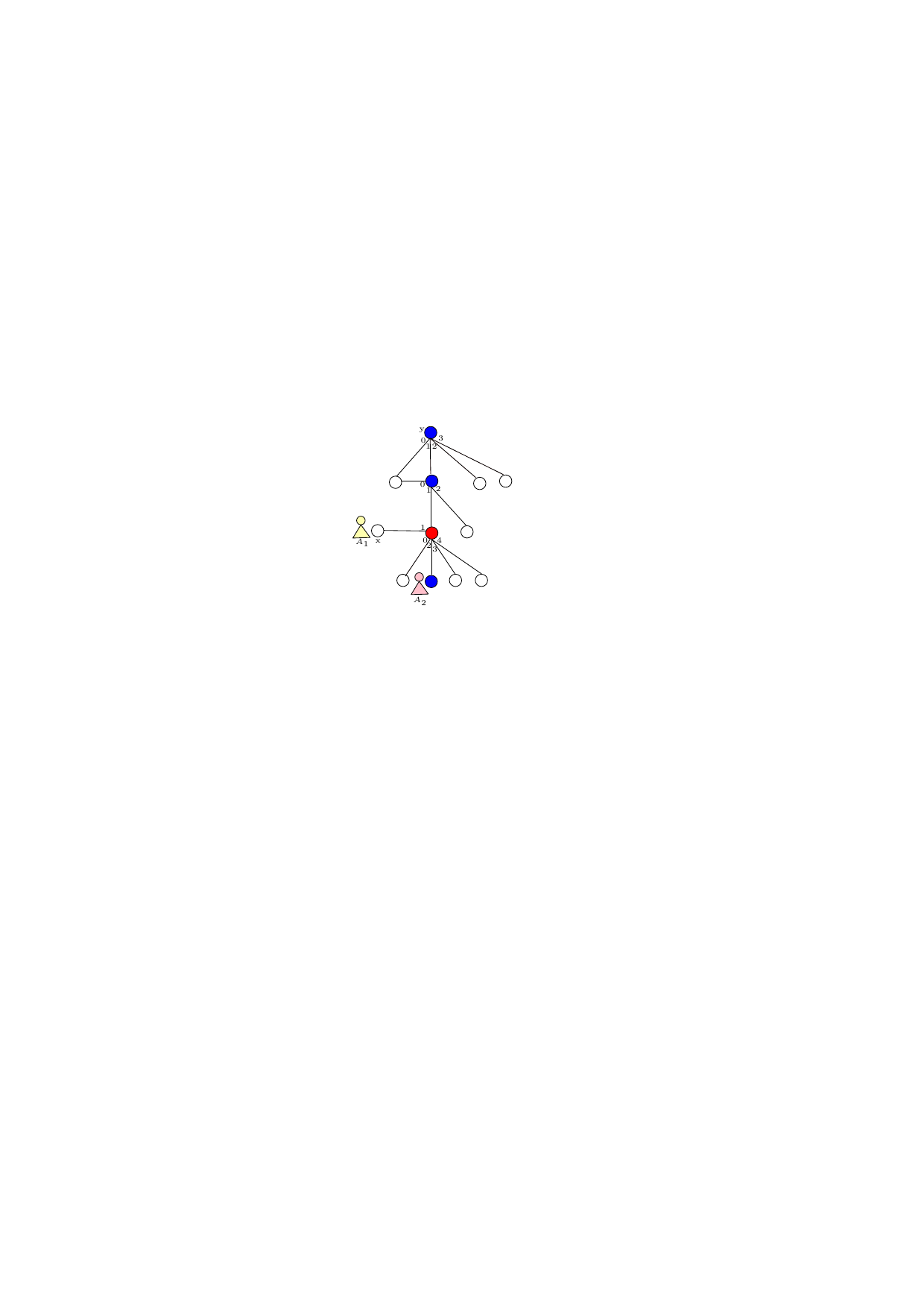}
         \caption{$A_2$ moves to $v$ using subroutine \textsc{Agent\_Found}}
         \label{fig:c}
     \end{subfigure}
     \hfill
     \begin{subfigure}[b]{0.4\textwidth}
         \centering
         \includegraphics[width=.5\textwidth]{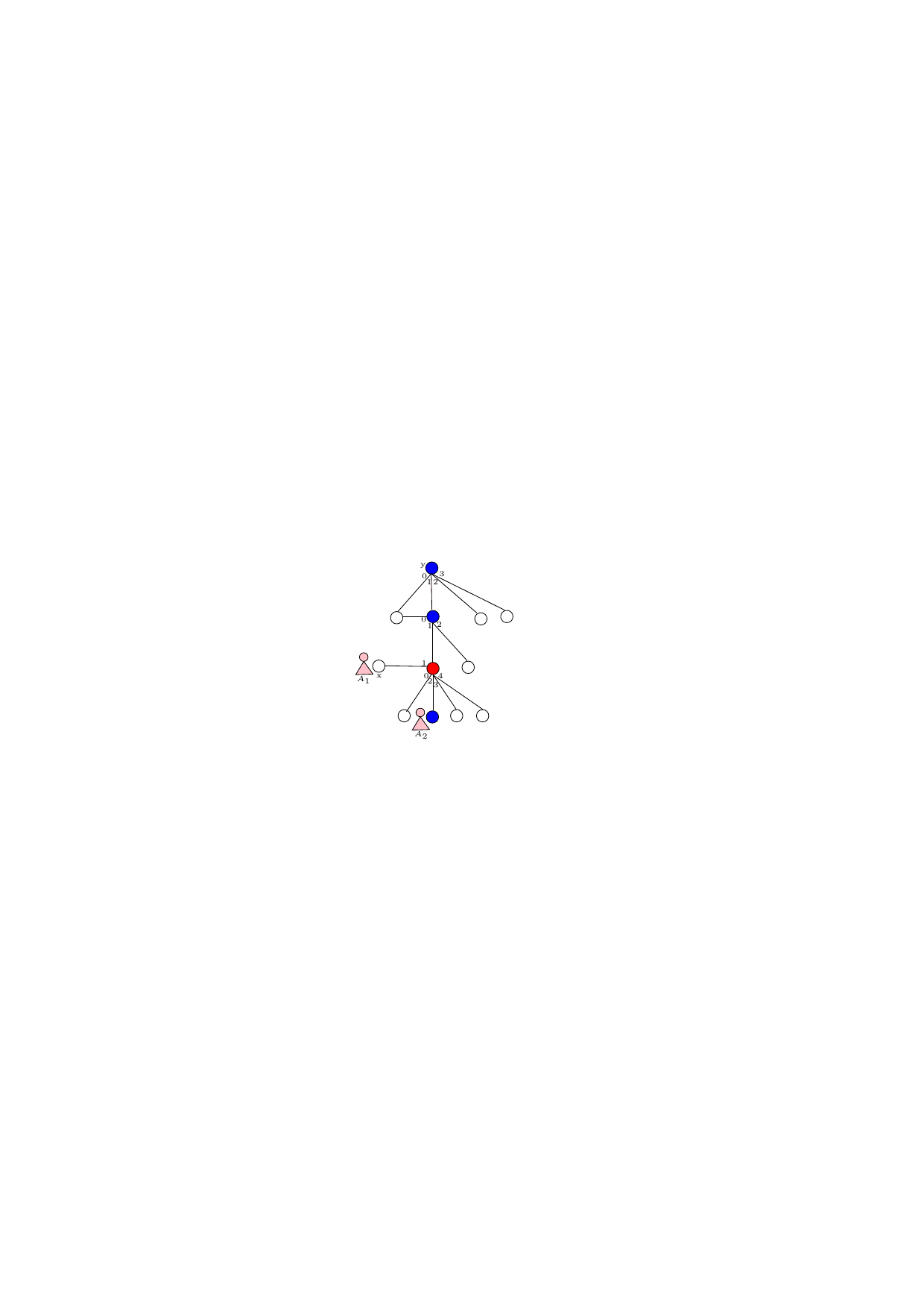}
         \caption{$A_2$ executes subroutine \textsc{Notify\_Agent} and wait for $2n$ rounds at $v$ and $A_1$ wakes up during that time}
         \label{fig:d}
     \end{subfigure}
     \hfill
     \begin{subfigure}[b]{0.4\textwidth}
         \centering
         \includegraphics[width=.5\textwidth]{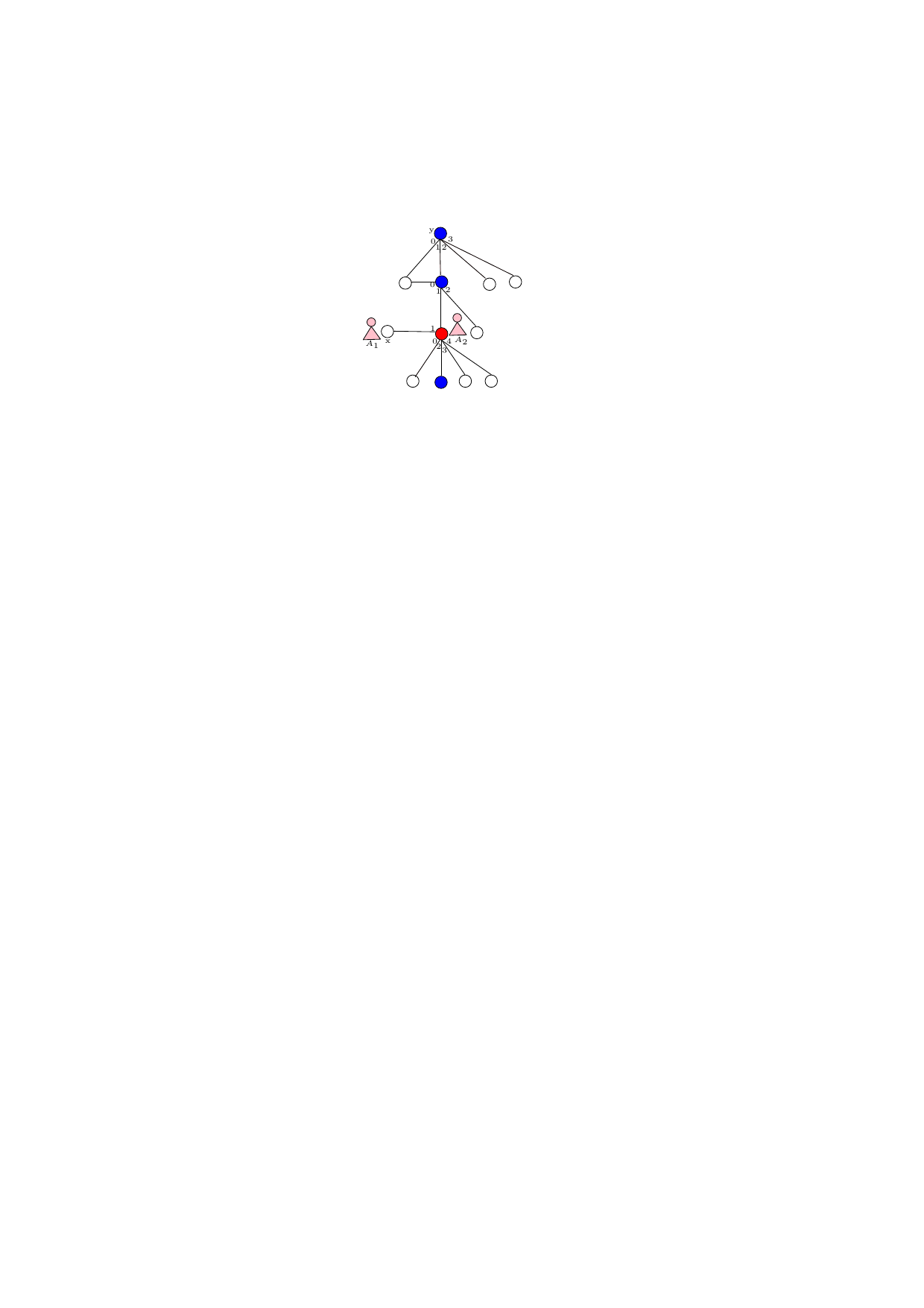}
         \caption{$A_2$ executes subroutine \textsc{Notify\_Agent} and moves back to $u$ after $2n$ rounds}
         \label{fig:e}
     \end{subfigure}
     \hfill
     \begin{subfigure}[b]{0.4\textwidth}
         \centering
         \includegraphics[width=.5\textwidth]{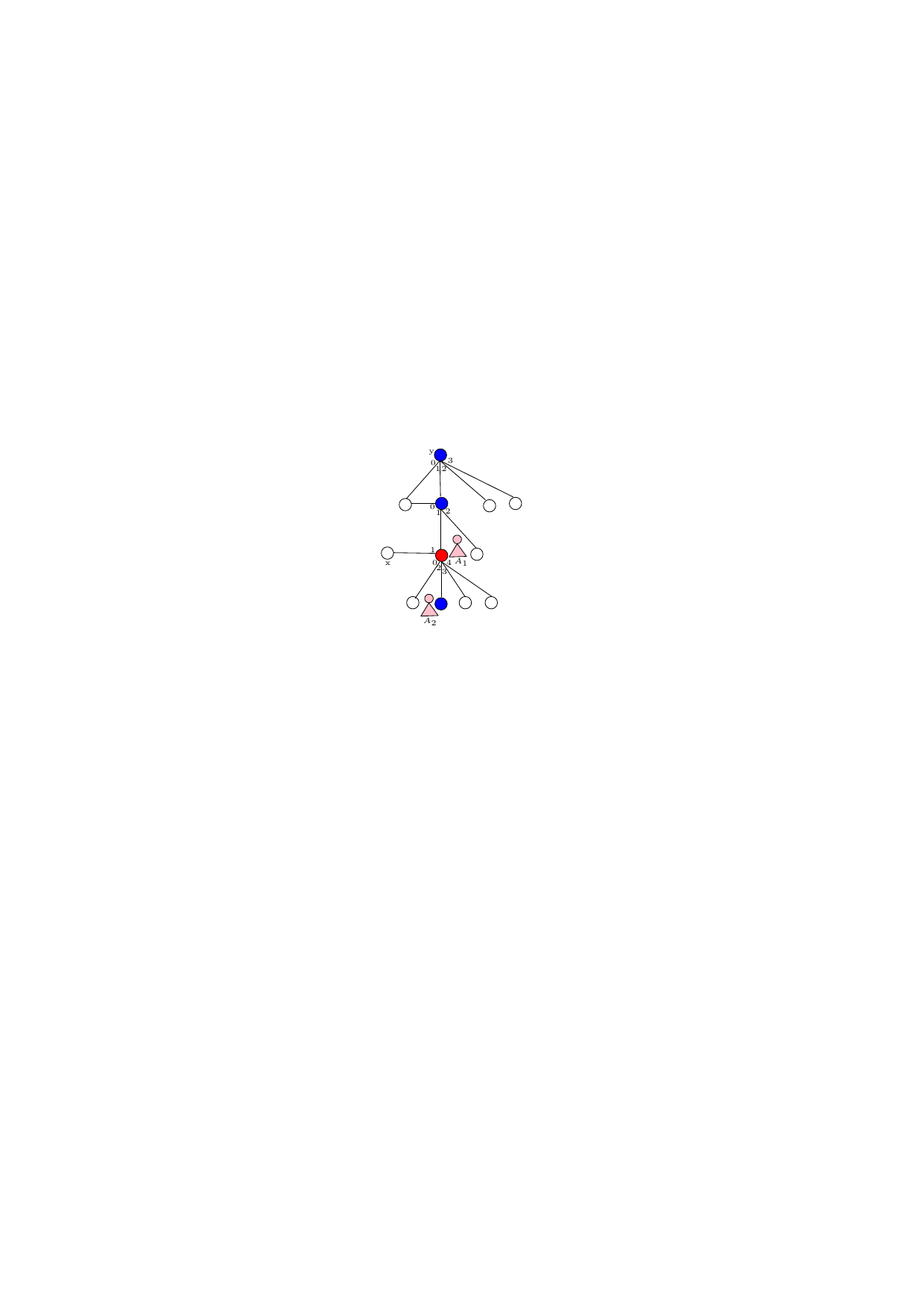}
         \caption{$A_2$ executes subroutine \textsc{Notify\_Agent} and wait for $2n$ rounds at $v$ while $A_1$ moves to $u$}
         \label{fig:f}
     \end{subfigure}
     \caption{Showing the case where $A_1$ is not awake when $A_2$ visits $u$. The placements of blue and red pebbles of the path from $y$ to $x$ are shown. The blue (red) node indicates the placement of a blue (red) pebble. The white nodes are the nodes where no pebbles are placed.}
     \label{fig:notify1}
\end{figure*}

\begin{figure*}
     \centering
     \begin{subfigure}[b]{0.45\textwidth}
         \centering
         \includegraphics[width=.5\textwidth]{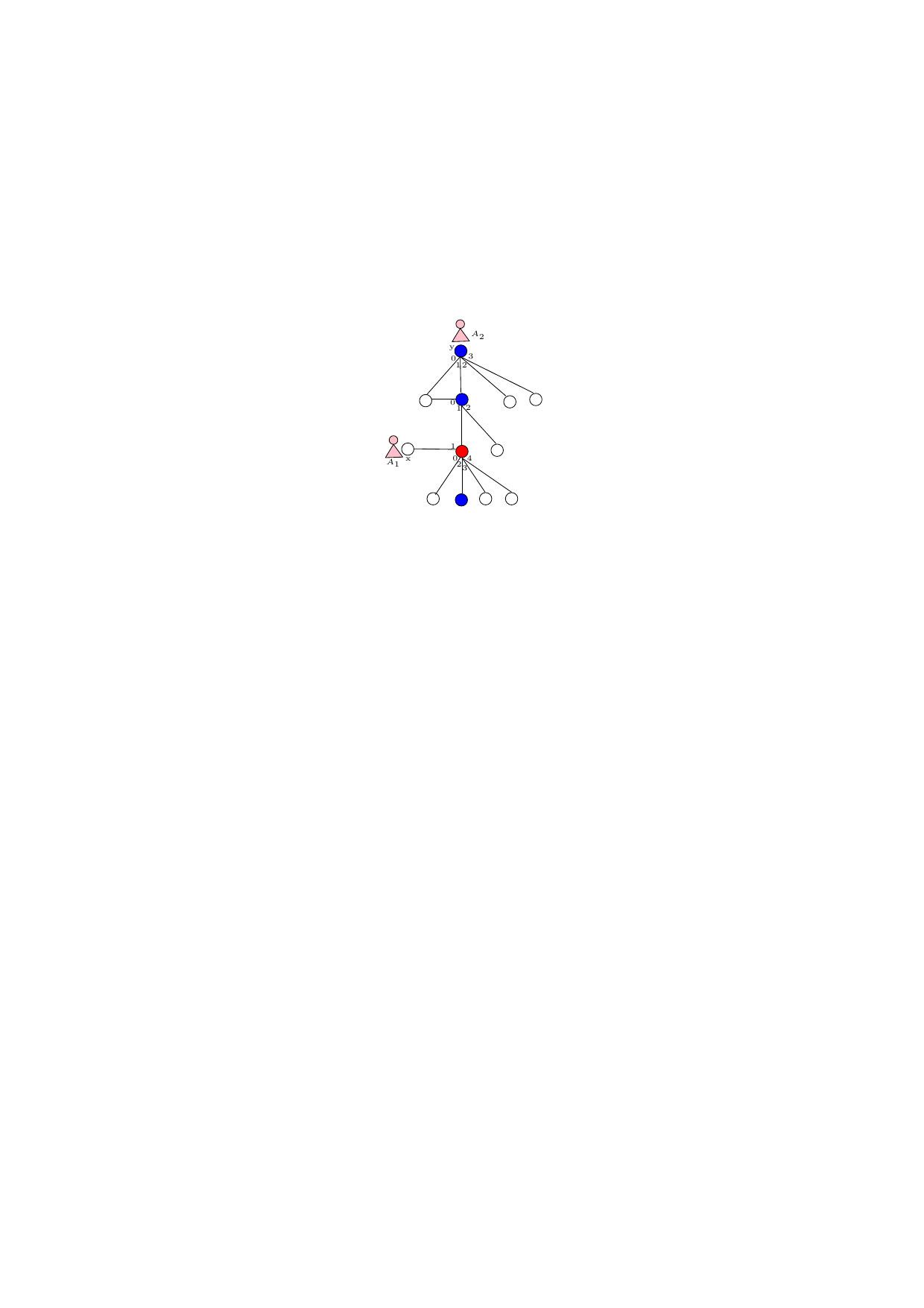}
         \caption{Initial Positions of the agents where both agents $A_1$ and $A_2$ are awake}
         \label{fig:2a}
     \end{subfigure}
     \hfill
     \begin{subfigure}[b]{0.45\textwidth}
         \centering
         \includegraphics[width=.5\textwidth]{figures/notify_agent/inform_agent.pdf}
         \caption{$A_2$ moves to $u$ using subroutine \textsc{Search\_Agent\_Blue} and $A_1$ is awake}
         \label{fig:2b}
     \end{subfigure}
     \hfill
     \begin{subfigure}[b]{0.45\textwidth}
         \centering
         \includegraphics[width=.5\textwidth]{figures/notify_agent/wait_at_v.pdf}
         \caption{$A_2$ moves to $v$ using subroutine \textsc{Agent\_Found}}
         \label{fig:2c}
     \end{subfigure}
     \hfill
     \begin{subfigure}[b]{0.45\textwidth}
         \centering
         \includegraphics[width=.5\textwidth]{figures/notify_agent/both_awake.pdf}
         \caption{$A_2$ executes subroutine \textsc{Notify\_Agent} and wait for $2n$ rounds at $v$ while $A_1$ moves to $u$}
         \label{fig:2d}
     \end{subfigure}
     \caption{Showing the case where $A_1$ is awake when $A_2$ visits $u$. The placements of blue and red pebbles of the path from $y$ to $x$ are shown.}
     \label{fig:notify2}
\end{figure*}

Agent executes subroutine \textsc{Agent\_Found} if the degree of node $s$ (initial position of the agent) is at most $n-2$ and $x$ is a neighbor of at least one node $v\in P$. According to steps \ref{agentfound:1}-\ref{agentfound:2} of this subroutine, if the agent finds a red-colored pebble at $v$, it visits the neighbors of $v$ in decreasing order of the port number, while if it finds a blue-colored pebble at $v$, then it visits the neighbor of $v$ in increasing order of the port number until it finds a node $u$ with a red or blue-colored pebble. According to steps \ref{agentfound:3}-\ref{agentfound:4}, once it finds a node $u$ with a red or blue pebble, it checks if the other agent is in its neighborhood or not. If it finds the other agent is in the neighborhood of $u$, then it calls subroutine \textsc{Agent\_Found}($u$) and continues to search for a node with a pebble. According to steps \ref{agentfound:5}-\ref{agentfound:6}, if the agent does not find the other agent in the neighborhood of $u$, then it calls subroutine \textsc{Notify\_Agent} to inform the other agent about its awakened state. If the color of the pebble at node $u$ is blue, then once the agent completes the execution of subroutine \textsc{Notify\_Agent}, it executes subroutine \textsc{Move\_to\_Z\_Blue}.

\begin{algorithm}
    \caption{\textsc{Agent\_Found}($v,color,e$)}
    \label{alg:agentfound}
     {
    $i = 0$; $pebble\_f = 0;$\\
    \While{$pebble\_f\ne 1$}
    {\label{agentfound:1}
        \If{$color = $ red}
        {
            $p = deg(v) - 1 - i$;
        }
        \ElseIf{$color = $ blue}
        {
            $p = i$;
        }
        \If{$p\ne e$}
        {
            Move to adjacent node $u$ using port $p$.\\
            \If{$u$ has red or blue colored pebble}
            {
            $color\_u =$ color of pebble at $u$.\\
            $e_1 = $ Incoming port at $u$.\\
            $pebble\_f = 1$;
            }
            \Else
            {
                Move back to $v$.\\
                $i = i+1$;
            }
        }
    }\label{agentfound:2}
    \If{another agent is in the neighborhood}
    {\label{agentfound:3}
        \textsc{Agent\_Found}($u,color\_u,e_1$)
    }\label{agentfound:4}
    \Else
    {\label{agentfound:5}
        \textsc{Notify\_Agent}$(u,e_1)$\\ \label{agentfound:7}
        \If{$color\_u = $ blue}
        {
            \textsc{Move\_to\_Z\_Blue}$(u,e_1)$ \label{agentfound:8}
        }
    }\label{agentfound:6}
    }
\end{algorithm}

Agent executes subroutine \textsc{Search\_Agent\_Red} if the degree of node $s$ (initial position of the agent) is at most $n-2$ and the pebble placed at $s$ is red and $x$ is not a neighbor of any node $v \in P$. According to steps \ref{searchagentred:1}-\ref{searchagentred:2} of this subroutine, if the agent finds a red-colored pebble at $v$, it visits the neighbors of $v$ in decreasing order of the port number until it finds a node $u$ with a red-colored pebble. According to steps \ref{searchagentred:3}-\ref{searchagentred:4}, once it finds a node $u$ with a red pebble, it checks if the other agent is in its neighborhood or not. If it finds the other agent is in the neighborhood of $u$, then it returns to $v$ and calls subroutine \textsc{Notify\_Agent} to inform the other agent about its awakened state. Once the agent completes the execution of subroutine \textsc{Notify\_Agent}, it executes subroutine \textsc{Return\_Back}. According to steps \ref{searchagentred:5}-\ref{searchagentred:6}, if the agent does not find the other agent in the neighborhood of $u$, then it calls subroutine \textsc{Search\_Agent\_Red} and continues to search for a node with a pebble. 

\begin{algorithm}
    \caption{\textsc{Search\_Agent\_Red}$(v,stack,e)$}
    \label{alg:searchagentred}
    {
    $p=deg(v)-1$;
    $r\_pebble\_f=0$;\\
    \While{$r\_pebble\_f\ne 1$} 
    {\label{searchagentred:1}
        \If{$p\ne e$}
        {
            Move to adjacent node $u$ using port $p$\\
            \If{$u$ has a red colored pebble}
            {
                $r\_pebble\_f=1$;\\
                $e_1 = $ Incoming port at $u$\\
            }
            \Else
            {
                {Move back to $v$}\\
                {$p=p-1$;}\\
            }
        }
    }\label{searchagentred:2}
    \If{another agent is in the neighborhood}
    {\label{searchagentred:3}
        Move back to $v$\\
        \textsc{Notify\_Agent}$(v,p)$\\ \label{searchagentred:7}
        \textsc{Return\_Back}$(stack)$\\ \label{searchagentred:8}
    }\label{searchagentred:4}
    \Else
    {\label{searchagentred:5}
        push$(stack,e_1)$\\
        \textsc{Search\_Agent\_Red}$(u,stack,e_1)$
    }\label{searchagentred:6}
    }
    \end{algorithm}

Agent executes subroutine \textsc{Search\_Agent\_Blue} if the degree of node $s$ (initial position of the agent) is at most $n-2$ and it finds a blue-colored pebble at a node $v$ and the other agent is not in the neighborhood of $v$. This means that either $x$ is a neighbor of at least one node on the shortest path from $y$ to $z_x$ or $x$ is not a neighbor of any node on the shortest path from $y$ to $z_x$ (final position of the agent) but both the shortest path from $y$ to $x$ and $y$ to $z_x$ share nodes apart from $y$. According to steps \ref{searchagentblue:1}-\ref{searchagentblue:2} of this subroutine, if the agent finds a blue-colored pebble at $v$, it visits the neighbors of $v$ in increasing order of the port number until it finds a node $u$ with a red or blue-colored pebble. According to steps \ref{searchagentblue:3}-\ref{searchagentblue:4}, once it finds a node $u$ with a red or blue pebble, it checks if the other agent is in its neighborhood or not. If it finds the other agent is in the neighborhood of $u$, then it means that $x$ is a neighbor of at least node $v\in P\setminus\{y,z_x\}$ and the agent calls subroutine \textsc{Agent\_Found}. According to steps \ref{searchagentblue:5}-\ref{searchagentblue:6}, if the agent does not find the other agent in the neighborhood of $u$, then it checks the color of the pebble at $u$. If $u$ has a blue color pebble, then it executes subroutine \textsc{Search\_Agent\_Blue} while if $u$ has a red color pebble, then it executes subroutine \textsc{Search\_Agent\_Red} to search for a node with a pebble.

The agent calls subroutine \textsc{Move\_to\_Z\_Blue} after it has informed the other agent about its awakened state and knows that the node it is currently at, is not the final node $z_x$. Subroutine \textsc{Move\_to\_Z\_Blue} is called when the agent is at a node with a blue-colored pebble and has the knowledge that the other agent was a neighbor of at least one node on the shortest path from $y$ to $z_x$. According to steps \ref{finalmoveblue:1}-\ref{finalmoveblue:2} of subroutine \textsc{Move\_to\_Z\_Blue}, the agent visits its neighborhood in increasing order of the port number until it finds a node $u$ with a red or blue colored pebble. According to steps \ref{finalmoveblue:3}-\ref{finalmoveblue:4}, if it finds a blue-colored pebble, then it calls subroutine \textsc{Move\_to\_Z\_Blue} and continues to search in the neighborhood of $u$ for a node with a red or blue-colored pebble. According to steps \ref{finalmoveblue:5}-\ref{finalmoveblue:6}, if it finds a red-colored pebble and the other agent is not in its neighborhood, then node $u = z_x$ and the agent has reached its final position. Hence, the agent stops at $u$ and terminates.

\begin{algorithm}
\caption{\textsc{Search\_Agent\_Blue}$(v,e)$}
    \label{alg:searchagentblue}
    \begin{multicols}{2}
     {
    {$p = 0$; $b\_pebble\_f = 0$;}\\
    \While{$b\_pebble\_f \ne 1$}
    {\label{searchagentblue:1}
        \If{$p\ne e$}
        {
            Move to adjacent node $u$ using port $p$.\\
            \If{$u$ has a red or blue color pebble}
            {
                $color = $ color of pebble at node $u$.\\
                $e_1 = $ Incoming port at $u$.\\
                $b\_pebble\_f = 1$;
            }
            \Else
            {
                Move back to $v$.\\
                $p = p+1$;
            }
        }
    }\label{searchagentblue:2}
    \If{another agent is in the neighborhood}
    {\label{searchagentblue:3}
        \textsc{Agent\_Found}($u, color,e_1$)\label{blue1}
    }\label{searchagentblue:4}
    \Else
    {\label{searchagentblue:5}
        \If{$color = $ blue}
        {\label{searchagentblue:7}
            \textsc{Search\_Agent\_Blue}$(u,e_1)$
        }\label{searchagentblue:8}
        \Else
        {\label{searchagentblue:9}
            $stack = []$\\
            push$(stack,e_1)$\\
            \textsc{Search\_Agent\_Red}$(u,stack,e_1)$\label{blue2}
        }\label{searchagentblue:10}
    }\label{searchagentblue:6} 
    }
    \end{multicols}
\end{algorithm}

The agent calls subroutine \textsc{Move\_to\_Z\_Red} after it has informed the other agent about its awakened state and the node it is currently at, either in its final position or is not in its final position. Subroutine \textsc{Move\_to\_Z\_Red} is called when the agent is at a node with a red-colored pebble and has the knowledge that the other agent was not a neighbor of any node on the shortest path from $y$ to $z_x$. According to steps \ref{finalmovered:1}-\ref{finalmovered:2} of subroutine \textsc{Move\_to\_Z\_Red}, the agent visits its neighborhood in increasing order of the port number until it finds a node $u$ with a blue-colored pebble.  According to steps \ref{finalmovered:3}-\ref{finalmovered:4}, if it finds a blue-colored pebble, then it calls subroutine \textsc{Move\_to\_Z\_Red} and continues to search in the neighborhood of $u$ for a node with a blue-colored pebble while according to steps \ref{finalmovered:5}-\ref{finalmovered:6}, if it does not find any blue-colored pebble in the neighborhood of $u$, then node $u = z_x$ and the agent has reached its final position. Hence agent stops at $u$ and terminates.

Subroutine \textsc{Notify\_Agent} is used by the moving agent to inform the other agent that it is awake. Subroutine \textsc{Notify\_Agent} is executed when this agent arrives at a node $u$ and finds the other agent in its neighborhood. According to subroutine \textsc{Notify\_Agent}, the agent either waits for $2n$ rounds at its current position (node $v$) or for the other agent to arrive in its neighborhood, whichever occurs first. If another agent doesn't arrive in the neighborhood within $2n$ rounds, then the agent moves to a node $u$ using port $p$ and informs the other agent that it has awakened and returns back to its position in the next round. The agent continues this cycle of waiting and visiting until the other agent arrives in its neighborhood.

Once the agent gets the information that the other agent is awake, then it executes subroutine \textsc{Return\_Back}. According to subroutine \textsc{Return\_Back}, using port labels stored in the stack, the agent returns to its initial node and executes subroutine \textsc{Move\_to\_Z\_Red}.

In Algorithm \textsc{PortLabeledBFS}, the agent does not consider red or blue color pebbles. Since no pebble was found at the initial position $s$, the agent waits at its initial position for the other agent to arrive and then depart from its neighborhood so that it can learn that the other agent is awakened. Once the other agent has moved out of its neighborhood, it visits all of its neighbors until it reaches a node $u$ such that the other agent is in the neighborhood of $u$.

\begin{algorithm}
    \caption{\textsc{Move\_to\_Z\_Blue}$(v,e)$}
    \label{alg:finalmoveblue}
     {
    {$p = 0$; $b\_pebble\_f = 0$}\\
    \While{$b\_pebble\_f \ne 1$ or $b\_pebble\_f \ne 2$}
    {\label{finalmoveblue:1}
        \If{$p\ne e$}
        {
            Move to adjacent node $u$ using port $p$\\
            \If{$u$ has a blue colored pebble}
            {
                $e_1 = $ Incoming port at $u$.\\
                $b\_pebble\_f = 1$;
            }
            \ElseIf{$u$ has a red colored pebble}
            {
                $b\_pebble\_f = 2$;
            }
            \Else
            {
                Move back to $v$.\\
                $p = p+1$;
            }
        }
    }\label{finalmoveblue:2}
    \If{$b\_pebble\_f = 1$}
    {\label{finalmoveblue:3}
        \textsc{Move\_to\_Z\_Blue}$(u,e_1)$
    }\label{finalmoveblue:4}
    \ElseIf{$b\_pebble\_f = 2$}
    {\label{finalmoveblue:5}
        Stop at $u$ and terminate.
    }\label{finalmoveblue:6}
    }
\end{algorithm}

 This action will inform the other agent that this agent has also awakened. The agent then returns to $s$ and waits for $7n$ rounds (which is the maximum time an agent can take to reach from $y$ to $z_x$) at $s$. It then visits all the sequence of port numbers of length $i$ in the lexicographical order, for $i =1,2, \cdots$ until it finds a node with a green pebble.

\begin{algorithm}
    \caption{\textsc{Move\_to\_Z\_Red}$(v,e)$}
    \label{alg:finalmovered}
    {
    $d = $ degree of node $v$\\
    {$p = 0$; $b\_pebble\_f = 0$}\\
    \While{$b\_pebble\_f \ne 1$ or $p<d$}
    {\label{finalmovered:1}
        \If{$p\ne e$}
        {
            Move to adjacent node $u$ using port $p$\\
            \If{$u$ has a blue colored pebble}
            {
                $e_1 = $ Incoming port at $u$.\\
                $b\_pebble\_f = 1$;
            }
            \Else
            {
                Move back to $v$.\\
                $p = p+1$;\\
            }
        }
    }\label{finalmovered:2}
    \If{$p = d$}
    {\label{finalmovered:3}
        Stay at $v$ and terminate.
    }\label{finalmovered:4}
    \Else
    {\label{finalmovered:5}
        \textsc{Move\_to\_Z\_Red}$(u,e_1)$
    }\label{finalmovered:6}
    }
\end{algorithm}

\begin{algorithm}
    \caption{\textsc{Notify\_Agent}($v$,$p$)}
    \label{alg:notifyAgent}
    {
    $flag = 0$\\
    \For{$i = 1$ to $2n$}
    {\label{notifyagent:1}
        \If{another agent is in the neighborhood of $v$}
        {\label{notifyagent:3}
            $flag = 1$\\
            break
        }\label{notifyagent:4}
        \Else
        {\label{notifyagent:5}
            Wait at $v$
        }\label{notifyagent:6}
    }\label{notifyagent:2}
    \If{$flag = 0$}
    {\label{notifyagent:7}
        Move to adjacent node $u$ using port $p$\\
        Move back to $v$\\
        \textsc{Notify\_Agent}($v$,$p$)
    }\label{notifyagent:8}
    \Else
    {
        return
    }
    }
\end{algorithm}

\begin{algorithm}
    \caption{\textsc{Return\_Back}(stack)}
    \label{alg:returnBack}
     {
    \If{$len(stack) > 1$}
    {
        $e=$pop(stack)\\
        Move to node $u$ using port $e$\\
        \textsc{Return\_Back}(stack)
    }
    \Else
    {
        Let $v$ be the current position of the agent.\\
        $e=$pop(stack)\\
        \textsc{Move\_to\_Z\_Red}$(v,e)$
    }
    }
\end{algorithm}


\begin{algorithm}
    \caption{\textsc{PortLabeledBFS}($s$)}
    \label{alg:bfs}
     {
    \While{1}
    {\label{bfs:1}
        \If{another agent is in its neighborhood}
        {\label{bfs:3}
            Wait for $2$ rounds at $s$\\ \label{bfsline1}
            \If{another agent is in the neighborhood}
            {\label{bfs:5}
                Go to line \ref{bfsline1}
            }\label{bfs:6}
            \Else
            {\label{bfs:7}
                \For{$s$ has unexplored edges}
                {
                    $p = $ lowest port label among unexplored edges\\
                    Move to adjacent node $u$ using port $p$\\
                    \If{another agent is in its neighborhood}
                    {
                        Move back to $s$ and mark all edges as unexplored\\
                        Go to line \ref{bfsline2}
                    }
                    \Else
                    {
                        Move back to $s$
                    }
                }
            }\label{bfs:8}
        }
        \Else
        {\label{bfs:4}
            Wait at $s$
        }
    }\label{bfs:2}
    {Wait for $7n$ rounds.}\\ \label{bfsline2}
    {$i=1$}\\
    \While{a green pebble is not found}
    {
        {Explore all sequences of port numbers of length $i$}\\
        {come back to $s$}\\
        {$i=i+1$}
    }
    }
\end{algorithm}

\textcolor{black}{Figure \ref{fig:notify1} and Figure \ref{fig:notify2} show the execution of the subroutine \textsc{Search\_Agent\_Blue} to move agent $A_2$ in the neighborhood of $A_1$, subroutine \textsc{Agent\_Found} to move agent $A_2$ out of the neighborhood of $A_1$, and \textsc{Notify\_Agent} to find if $A_1$ is awake. The agent filled with yellow indicates that it is asleep in the current round while the agent filled with pink indicates that it is awake. }

\subsubsection{Correctness}

In this section, we show that with the pebble placement described in section \ref{sec:4_1} and the corresponding algorithm given in section \ref{sec:4_2}, the two agents visit all nodes of the graph and no collision occurs between two agents during the execution of the exploration algorithm. Let $A_1$ and $A_2$ be the agents deployed at nodes $x$ and $y$ respectively.

The following lemma is useful in establishing the fact that for at least one agent, its movement path towards the last node in the BFS traversal rooted at the starting node of the other agent does not contain the starting node of the other agent.

\begin{lemma}\label{lem:4_1}
    Consider the traversal sequences $AnonymousBFS(x)$ and $AnonymousBFS(y)$  starting from node $x$ and node $y$, respectively. If $x$ is present on the shortest path from node $y$ to $z_x$, then $y$ can not be present on the shortest path from node $x$ to  $z_y$.
\end{lemma}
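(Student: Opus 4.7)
The plan is to translate the hypothesis and desired conclusion into inequalities on shortest-path distances and eccentricities, and then derive a simple contradiction. First I would observe that $AnonymousBFS(v)$ enumerates walks from $v$ in order of increasing length, so a node $u$ is first visited exactly during the enumeration of walks of length $d(v,u)$. Consequently the ordering $(f_1,\ldots,f_n)$ is by nondecreasing distance from $v$, which forces $z_v = f_n$ to be a node at maximum distance from $v$; in particular $d(v,z_v)=\mathrm{ecc}(v)$, the eccentricity of $v$. This identification converts the BFS-theoretic statement into a purely metric one about the graph.

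Next I would argue by contradiction: suppose both that $x$ lies on some shortest path from $y$ to $z_x$ and that $y$ lies on some shortest path from $x$ to $z_y$. The first hypothesis gives $d(y,z_x)=d(y,x)+d(x,z_x)=d(x,y)+\mathrm{ecc}(x)$, and since $z_x$ is some specific vertex we automatically have $d(y,z_x)\le \mathrm{ecc}(y)$, so
\[
d(x,y)+\mathrm{ecc}(x) \le \mathrm{ecc}(y).
\]
The symmetric argument from the second hypothesis gives
\[
d(x,y)+\mathrm{ecc}(y) \le \mathrm{ecc}(x).
\]
Adding these two inequalities yields $2\,d(x,y)\le 0$, contradicting $x\ne y$. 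Thus the two hypotheses cannot both hold, which is exactly the claim.

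The only subtle step is the first one, i.e.\ identifying $d(x,z_x)$ with the eccentricity of $x$. The definition of $z_x$ is phrased in terms of the BFS enumeration of port sequences rather than directly as ``a farthest node,'' so I would spell out carefully that because walks are enumerated in lexicographic order grouped by length, the first time any node $u$ appears is when some length-$d(x,u)$ walk reaches it, so $f_i$ is encountered for the first time strictly before $f_j$ whenever $d(x,f_i)<d(x,f_j)$. The rest of the proof is then a two-line triangle-inequality style calculation, so I do not anticipate any other obstacles.
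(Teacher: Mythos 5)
Your proof is correct and is essentially the paper's argument: both proceed by contradiction, use the shortest-path hypotheses to write $d(y,z_x)=d(x,y)+d(x,z_x)$ and $d(x,z_y)=d(x,y)+d(y,z_y)$, and invoke the fact that $z_x$ (resp.\ $z_y$) is a farthest node from $x$ (resp.\ $y$) to derive $d(x,y)\le 0$. Your version is a slightly cleaner symmetric packaging (adding the two eccentricity inequalities rather than substituting one equation into the other, and omitting the paper's unnecessary digression showing $z_x\ne z_y$), and your explicit justification that $d(v,z_v)$ equals the eccentricity of $v$ is a point the paper uses implicitly.
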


\begin{proof} We prove this lemma by contradiction.
    Suppose that $x$ is a node on the shortest path from node $y$ to node $z_x$ and $y$ is a node on the shortest path from node $x$ to node $z_y$. Let $d(a,b)$ denote the shortest distance from node $a$ to $b$. Since $x$ and $y$ are two different nodes where agents are initially placed, therefore $d(x,y)\ge 1$.
    If $x$ is a node on shortest path from node $y$ to the node $z_x$, then
    \begin{equation}\label{eq:1}
        d(y,z_x) = d(y,x) + d(x,z_x)
    \end{equation}
    If $y$ is a node on shortest path from node $x$ to node $z_y$, then
    \begin{equation}\label{eq:2}
        d(x,z_y) = d(x,y) + d(y,z_y)
    \end{equation}

    We first show that $z_x\ne z_y$. To show this, suppose that $d(z_x,z_y)=0$. Then, $d(x,z_y)=d(x,z_x)$ and $d(y,z_y)=d(y,z_x)$. From equation \ref{eq:2}, we get
    \begin{equation}
        d(x,z_x) = d(x,y)+d(y,z_x) \label{eq:3}
    \end{equation}
    Substituting equation \ref{eq:3} in equation \ref{eq:1}, we get $d(y,z_x) = d(x,y)+d(x,y)+d(y,z_x)$. Therefore, $d(x,y) = 0$, which contradicts the fact that $x \ne y$. Therefore, $z_x$ and $z_y$ are two different nodes.\\
    Now, from Equation\ref{eq:2}, we get:
    \begin{equation}\label{eq:4}
        d(y,z_y) = d(x,z_y)-d(x,y)
    \end{equation}
    Since $z_y$ is the node visited last in $AnonymousBFS(y)$ so, $d(y,z_y)\ge d(y,z_x)$. Hence, from equations \ref{eq:1} and \ref{eq:4}, we get
    
     $d(x,z_y)-d(x,y) \ge d(y,x)+d(x,z_x)$, which implies that 
     $d(x,z_y) \ge d(x,z_x)+2d(x,y)$.
    Since $d(x,y)\ge 1$, $d(x,z_y)>d(x,z_x)$ which is not true as $z_x$ is the node visited last in $AnonumousBFS(x)$. Hence, $y$ can not be present on the shortest path from node $x$ to node $z_y$ if $x$ is a node on the shortest path from $y$ to $z_x$.
\end{proof}

The following lemma gives the correctness of the algorithm in case of either $deg(y) = n-1$ or $deg(x) = n-1$.

\begin{lemma}\label{lem:4_3}
    If either $deg(y) = n-1$ or $deg(x) = n-1$, then by following Algorithm \ref{alg: init}, each node in $G$ is visited by at least one agent and there is no collision between the agents during the execution of Algorithm \ref{alg: init}.
\end{lemma}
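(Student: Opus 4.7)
The plan is to split into the two symmetric cases $deg(y)=n-1$ and $deg(y)\ne n-1$ with $deg(x)=n-1$, and in each case argue that (i) the agent whose starting node has degree $n-1$ visits all of its incident ports except the one leading to the other agent and then terminates at the designated green pebble, while (ii) the other agent terminates immediately on wake-up. Collision-freeness will then follow because the stationary agent never moves and the moving agent never takes the port that leads to it; coverage will follow because a node of degree $n-1$ is adjacent to every other vertex of $G$, so visiting all its remaining neighbors together with itself and the stationary agent's vertex exhausts $V$.

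First I would handle $deg(y)=n-1$. By item~1 of the pebble placement in Section~\ref{sec:4_1} a green pebble sits at $x$, so on waking $A_1$ immediately enters lines~\ref{init:1}--\ref{init:2} of Algorithm~\ref{alg: init} and stays at $x$ forever. For $A_2$ I would split on the port label $p$ at $y$ on the edge $(x,y)$. When $p=0$, the red pebble at $y$ routes $A_2$ to \textsc{One\_Level\_BFS\_Color}$(y)$, which increments $p$ at the top of the loop and therefore skips port~$0$ outright; the green pebble on the designated last neighbor stops the traversal. When $0<p<n-1$, the red pebble at $y_{p-1}$ triggers the ``$p \gets p+2$'' rule of \textsc{One\_Level\_BFS}, precisely skipping the port-$p$ edge to $x$, after which the walk continues until the green pebble halts it. When $p$ is the largest port, the routine \textsc{One\_Level\_BFS}$(y,0)$ simply terminates at the penultimate neighbor (where the green pebble is placed) without ever taking the port to $x$. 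In every subcase $A_2$ visits every neighbor of $y$ except $x$, which together with $y$ itself and with $x$ (held by $A_1$) exhausts all $n$ nodes.

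The case $deg(y)\ne n-1$ with $deg(x)=n-1$ is completely symmetric under item~2 of the placement: a green pebble on $y$ makes $A_2$ terminate at once, while $A_1$ runs the same one-level routines from $x$, with the port to $y$ skipped by the identical red-pebble mechanism. In both cases the stationary agent never leaves its start vertex and the moving agent by the port-skipping argument never visits that vertex, so no collision can occur in any round regardless of the two wake-up times; differences in wake-up only shift the stationary agent's (immediate) termination and the start of the moving agent's walk, neither of which affects the trajectory.

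The main technical point to pin down is the arithmetic of the increment rule in \textsc{One\_Level\_BFS}: after the red pebble at $y_{p-1}$ triggers ``$p \gets p+2$'', the next explored port must be $p+1$ and not $p$ or $p+2$, so that the edge to $x$ is genuinely skipped rather than reached one round later. This is exactly what the placement of the red pebble one step before the forbidden port (and the separate routine \textsc{One\_Level\_BFS\_Color} seeded by a red pebble on $y$ itself in the boundary case $p=0$) is designed to enforce, and once this off-by-one is verified the rest of the argument reduces to a routine pass through the three port subcases.
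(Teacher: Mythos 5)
Your proposal is correct and follows essentially the same argument as the paper: the agent at the degree-$(n-1)$ node performs the one-level traversal, skipping the port to the other agent via the red-pebble mechanism (\textsc{One\_Level\_BFS\_Color} for $p=0$, the $p{+}2$ increment for $0<p<n-1$, and early termination at the green pebble on $y_{n-2}$ for $p=n-1$), while the other agent terminates immediately on its green pebble. Your explicit coverage count (all neighbors except $x$, plus $y$ and $x$ itself) and the off-by-one check on the increment rule are exactly the points the paper's proof relies on, stated slightly more explicitly.
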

    
\begin{proof}
    We prove that if $deg(y)=n-1$, then the statement of the lemma is true. Similar proof will work when $deg(x)=n-1$.
 
    Consider the following case.
    \begin{enumerate}
        \item[Case 1] The agent at $y$ finds a red pebble at $y$. According to the pebble placement strategy explained in Section \ref{sec:4_1}, a red pebble is placed at $y$ only when the port number of the edge $(y,x)$ at $y$ is 0. According to Algorithm \ref{alg: init}, if $deg(y) = n-1$ and a red pebble is placed at $y$, then the agent calls subroutine \textsc{One\_Level\_BFS\_Color} according to which in line \ref{bfscolor:1}, the agent $A_2$ initializes the port number $p$ as 0 and in line \ref{bfscolor:2}, updates the port number value to $p+1$ before moving from node $y$. According to steps \ref{bfscolor:3}-\ref{bfscolor:4}, the agent moves to the neighbor node and back to $y$ until $p\le n-1$. Hence, the agent $A_2$ visits all the neighbor nodes of $y$ except node $x$ that can be reached from $y$ using port 0. On the other hand, a green pebble is placed at $x$ and the agent at $x$ once wakes up, sees this green pebble and immediately terminates (lines \ref{init:1}-\ref{init:2} of Algorithm \ref{alg: init}). Hence, there is no collision between the agents while executing algorithm \ref{alg: init}.
        
        \item[Case 2] The agent finds no pebble at $y$. According to Algorithm \ref{alg: init}, if $deg(y) = n-1$ and a no pebble is placed at $y$, then $A_2$ calls subroutine \textsc{One\_Level\_BFS}, according to which in steps \ref{bfslevel:6}-\ref{bfslevel:8}, $A_2$ visits neighbor nodes in increasing order of port numbers. On the other hand, a green pebble is placed at $x$ and the agent at $x$ once wakes up, sees this green pebble and immediately terminates (lines \ref{init:1}-\ref{init:2} of Algorithm \ref{alg: init}). Let $p$ be the port label at node $y$ on edge $(x,y)$. Consider the following case.
        \begin{enumerate}
            \item If $A_2$ reaches node $y_{n-2}$ with a green pebble without finding a red pebble, then according to the pebble placement strategy given in section \ref{sec:4_1}, node $x$ can be reached from node $y$ using port $n-1$. According to steps \ref{bfslevel:1}-\ref{bfslevel:2} of subroutine \textsc{One\_Level\_BFS}, $A_2$ stops and terminates the algorithm as soon as it finds a node with a green pebble. Hence, $A_2$ visits all its neighbor nodes except node $x$. Therefore, there is no collision between the agents while executing algorithm \ref{alg: init}.

            \item If $A_2$ finds a red pebble at node $y_{p-1}$. According to the pebble placement strategy, node $x$ can be reached from node $y$ using port $p$. According to steps \ref{bfslevel:3}-\ref{bfslevel:5} of subroutine \textsc{One\_Level\_BFS}, $A_2$ updates the value of port number as $p+1$ and skips the port $p$ where the other agent can be found. Hence, $A_2$ visits all its neighbor nodes except node $x$. Hence, there is no collision between the agents while executing algorithm \ref{alg: init}.
        \end{enumerate}
    \end{enumerate}
\end{proof}

We define a node $u$ as follows.
\begin{itemize}
    \item If $x$ is neighbor to some nodes on the path $P$, then we define $u$ as the node on $P$ 
 that is a neighbor of $x$ and closest to $z_x$ among all other neighbors of $x$ on $P$. 
    \item If $x$ is not a neighbor to any node on the path $P$, then we define $u$ as the node on the path $P_1$ (the shortest path from $y$ to $x$) that is neighbor node of $x$.
\end{itemize}

The next three lemmas ensure two important aspects of our algorithm: (1) $A_2$ moves to the one-hop neighborhood of $A_1$, and identifies whether $A_1$ is awake or not, and (2) if it is not awake yet, waits till it becomes wake up, and then moves to $z_x$ collision-free. 

\begin{lemma}\label{lem:4_2}
    If $deg(y) < n-1$ and $deg(x) < n-1$, then $A_2$ successfully moves from $y$ to $u$ without any collision.
\end{lemma}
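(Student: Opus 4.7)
The plan is to prove the lemma by case analysis over the sub-cases of pebble placement Case 3, which is the regime we are in because $\deg(x), \deg(y) < n-1$. Throughout, I would first establish that $A_1$ never moves during this phase: in every sub-case of Case 3, no pebble is placed at $x$, so upon waking $A_1$ dispatches through Algorithm \ref{alg: init} into subroutine \textsc{PortLabeledBFS}, whose first instruction is to wait at $x$ until another agent appears in its neighborhood. Consequently, the only way a collision could arise is if $A_2$'s pebble-driven trajectory ever enters $x$, and it therefore suffices to trace $A_2$'s movement and verify that it lands at $u$ via a path disjoint from $\{x\}$.

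Next, I would observe that $A_2$ upon waking at $y$ always finds either a red or a blue pebble (a direct check from each sub-case of the placement), so Algorithm \ref{alg: init} routes $A_2$ into one of \textsc{Agent\_Found}, \textsc{Search\_Agent\_Red}, or \textsc{Search\_Agent\_Blue} depending on the color at $y$ and on whether $A_1$ is already in $y$'s one-hop neighborhood (which happens exactly when $x$ is adjacent to $y$). The case analysis then proceeds sub-case by sub-case. For 3(a) ($y = z_x$) and 3(b)(ii)(i) ($P \cap P_1 = \{y\}$), red pebbles lie on $P_1 \setminus \{x\}$, and the decreasing-port sweep inside \textsc{Search\_Agent\_Red} hops $A_2$ from one red-pebbled node of $P_1$ to the next, until $A_2$ stands at the unique neighbor of $x$ on $P_1$, which by definition is $u$. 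For 3(b)(ii)(ii) ($|P \cap P_1| > 1$), blue pebbles on $P_{yu'} \setminus \{u'\}$ drive $A_2$ via \textsc{Search\_Agent\_Blue} to the pivot $u'$, where a red pebble is waiting; at that point the algorithm switches to \textsc{Search\_Agent\_Red} on $P_{u'x} \setminus \{x\}$ and halts at the neighbor of $x$, again $u$. In each of these three sub-cases the path $A_2$ traverses lies in $(P \cup P_1) \setminus \{x\}$, so a collision with $A_1$ at $x$ is impossible.

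The main obstacle is sub-case 3(b)(i), in which $x$ is a neighbor of at least one interior node $v_i$ of $P$. Here $A_2$ is guided along $P$ itself and may, at any such $v_i$, be tempted to probe the port toward $x$ during its neighborhood sweep. The critical point to verify is that the color chosen for $v_i$ by the placement rule (blue when the port at $v_i$ toward $v_{i+1}$ is smaller than the port toward $x$, red otherwise) forces the lexicographic sweep direction inside the relevant subroutine to encounter $v_{i+1}$ strictly before the port leading to $x$; therefore $A_2$ commits to $v_{i+1}$ and never takes the edge to $x$. Iterating this argument along $P$ shows that $A_2$ advances from $v_0 = y$ through successive $v_i$'s until reaching the designated $u$, namely the neighbor of $x$ on $P$ closest to $z_x$, without colliding with $A_1$. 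A residual check is that this iteration halts at $u$ and not at an earlier neighbor of $x$; this follows because the subroutine's recursion terminates exactly when it finds a pebbled node adjacent to $A_1$, and by the placement of a red pebble at $z_x$ together with the choice of $u$ as the closest such neighbor to $z_x$, $A_2$'s first such halting point is precisely $u$.
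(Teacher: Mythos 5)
Your decomposition by pebble-placement sub-cases is essentially the paper's argument in different packaging (the paper organizes the cases by what $A_2$ observes on waking --- other agent sensed or not, red versus blue at $y$ --- but the content is the same), and you correctly isolate the paper's key idea: at an $x$-adjacent node of $P$ the pebble color selects the sweep direction so that the next pebbled node is probed strictly before the port leading to $x$. However, in the very sub-case you call the main obstacle (3(b)(i)) your account of the trajectory is wrong: you invert the stopping rule of \textsc{Agent\_Found}. That subroutine \emph{recurses} when the newly found pebbled node has the other agent in its neighborhood, and it \emph{halts} (calling \textsc{Notify\_Agent}) at the first pebbled node that is not adjacent to $x$. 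So $A_2$ does not stop ``when it finds a pebbled node adjacent to $A_1$''; it passes through every $x$-neighbor on $P$ (at most three, by the shortest-path property) precisely because it keeps sensing $A_1$ there, reaches $u$ as the last of these, and the node from which it actually notifies is the pebbled node after $u$ --- this is what the paper's Lemma~\ref{lem:4_4} relies on. Your ``residual check,'' which invokes the red pebble at $z_x$ and the choice of $u$ closest to $z_x$, is not what makes the walk end up at $u$, and as written that step would not go through even though the algorithm itself behaves correctly.

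A second, smaller issue is the opening reduction. \textsc{PortLabeledBFS} does not simply have $A_1$ wait until an agent appears: once $A_1$ has sensed an agent and at a later check no longer senses it, it starts probing its own neighbors. In sub-case 3(b)(i) (and even via transient probes of $x$-neighbors during sweeps in the other sub-cases) $A_2$ can enter and then leave the neighborhood of $x$ before reaching $u$, so ``$A_1$ never moves, hence a collision is only possible at $x$'' does not follow from the reason you give. In fairness, the paper's proof of this lemma is equally silent about $A_1$ and only traces $A_2$'s walk, deferring $A_1$'s behavior to Lemma~\ref{lem:4_7}; but since you make the reduction explicit, you would need either to restrict it to the rounds before $A_2$ first enters the neighborhood of $x$, or to argue separately that $A_1$'s probing cannot meet $A_2$ during this phase.
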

    
\begin{proof}
    When $A_2$ wakes up at $y$ and finds that $deg(y) < n-1$, according to the pebble placement strategy, $A_2$ either finds a red or a blue-colored pebble at $y$. According to steps \ref{init:5}-\ref{init:14} of Algorithm \ref{alg: init}, if $A_2$ finds $deg(y) < n-1$ and a red or blue pebble at $y$, then $A_2$ checks if $A_1$ is in its neighborhood or not. Consider the following cases.
    \begin{enumerate}
        \item\label{case 3a} If $A_1$ is in the neighborhood of $A_2$ when it wakes up, then $d(x,y) = 1$. In this case, $A_2$ calls subroutine \textsc{Agent\_Found} (according to steps \ref{init:7}-\ref{init:8} of Algorithm \ref{alg: init}). According to steps \ref{agentfound:1}-\ref{agentfound:2} of \textsc{Agent\_Found}, $A_2$ visits its neighbor in increasing port numbers if $A_2$ finds a blue pebble. Otherwise, $A_2$ visits its neighbor in decreasing port numbers if $A_2$ finds a red pebble until it finds a node $v_1$ with a red or blue pebble. If $A_1$ is not in the neighborhood of $v_1$, then $u=y$.
 
        If $A_1$ is in the neighborhood of $v_1$, then $A_2$ calls subroutine \textsc{Agent\_Found} from $v_1$ (lines \ref{agentfound:3}-\ref{agentfound:4}) and visits its neighbor in increasing or decreasing port numbers depending on the color (blue or red) of the pebble at $v$ until it finds node $v_2$ with a red or blue pebble. If $A_1$ is not in the neighborhood of $v_2$, then $u=v_1$.

        If $A_1$ is in the neighborhood of $v_2$, then since $x$ can be a neighbor of at most three nodes on $P$, $u = v_2$.

        According to the pebble placement strategy, if the port label on edge connecting nodes $y$ and $v_1$ at $y$ is greater than the port label on edge connecting nodes $y$ and $x$ at $y$, then a red pebble is placed at node $y$ on $P$. Otherwise, a blue pebble is placed at $y$. Similarly, if the port label on edge connecting nodes $v_1$ and $v_2$ at $v_1$ is greater than the port label on edge connecting nodes $v_1$ and $x$ at $v_1$, then a red pebble is placed at node $v_1$ on $P$. Otherwise, a blue pebble is placed at $v_1$. Hence, when $A_2$ visits neighbors of $y$ and $v_1$, then it finds the node with a red or blue pebble before $x$ and no collision happens between the two agents when $A_2$ moves from $y$ to $u$.

        \item\label{case 3b} If $A_1$ is not in the neighborhood of $A_2$ and $A_2$ finds a red pebble at $y$, then $A_2$ knows that $d(x,y) \ne 1$ and $x$ is not a neighbor of any node on path $P$. In this case, $A_2$ calls subroutine \textsc{Search\_Agent\_Red} (according to lines \ref{init:9}-\ref{init:10} of Algorithm \ref{alg: init}). According to lines \ref{searchagentred:1}-\ref{searchagentred:2} of subroutine \textsc{Search\_Agent\_Red}, $A_2$ visits its neighbor in decreasing order of port numbers until it finds a node $v$ with a red pebble. According to lines \ref{searchagentred:5}-\ref{searchagentred:6}, $A_2$ checks if the other agent is in its neighborhood. If $A_1$ is not in the neighborhood of $v$, then $A_2$ calls subroutine \textsc{Search\_Agent\_Red} and repeats the process. If $A_1$ is in the neighborhood of $v$, then $u = v$.

        Since no node before $u$ is the neighbor of $x$, then no collision happens between the two agents when $A_2$ moves from $y$ to $u$.

        \item If $A_1$ is not in the neighborhood of $A_2$ and $A_2$ finds a blue pebble at $y$, then $A_2$ learns that $d(x,y) \ne 1$ and $x$ may be a neighbor of some node on path $P$. In this case, $A_2$ calls subroutine \textsc{Search\_Agent\_Blue} (according to lines \ref{init:11}-\ref{init:12} of Algorithm \ref{alg: init}). According to lines \ref{searchagentblue:1}-\ref{searchagentblue:2} of subroutine \textsc{Search\_Agent\_Blue}, $A_2$ visits its neighbor in increasing order of port numbers until it finds a node $u'$ with a red or blue pebble. According to lines \ref{searchagentblue:3}-\ref{searchagentblue:6}, $A_2$ checks if the other agent is in its neighborhood. Consider the following cases.
        \begin{enumerate}
            \item \label{case 3c} If $A_1$ is in the neighborhood, then $x$ is a neighbor of at least one node on $P$, and $A_2$ calls subroutine \textsc{Agent\_Found} from $u'$ (according to lines \ref{searchagentblue:3}-\ref{searchagentblue:4}). Using similar arguments as in Case \ref{case 3a} above, we can say that no collision happens between the two agents when $A_2$ moves from $y$ to $u$.

            \item \label{case 3d} If $A_1$ is not in the neighborhood and $A_2$ finds a red pebble, then $x$ is not a neighbor of any node on $P$, and $A_2$ calls subroutine \textsc{Search\_Agent\_Red} (according to steps \ref{searchagentblue:9}-\ref{searchagentblue:10}). Using similar arguments as in Case \ref{case 3b} above, we can say that no collision happens between the two agents when $A_2$ moves from $y$ to $u$.

            \item If $A_1$ is not in the neighborhood and $A_2$ finds a blue pebble, then $A_2$ calls subroutine \textsc{Search\_Agent\_Blue} (according to steps \ref{searchagentblue:7}-\ref{searchagentblue:8}). In this case, $A_2$ repeats the process until either $A_1$ is in the neighborhood or a red pebble is found. If $A_1$ is in the neighborhood, then by case \ref{case 3c} above and if a red pebble is found, then by \ref{case 3d} above, we know that no collision happens between the two agents when $A_2$ moves from $y$ to $u$.
        \end{enumerate}
    \end{enumerate}
\end{proof}

\begin{lemma}\label{lem:4_4}
    If $deg(y) < n-1$ and $deg(x) < n-1$, then $A_2$ moves from $u$ to $z_x$ without collision only after $A_2$ knows that $A_1$ is awake. 
\end{lemma}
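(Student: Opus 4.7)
The plan is to split the statement into two assertions: (i) before initiating any movement toward $z_x$, $A_2$ has synchronized with $A_1$ via \textsc{Notify\_Agent}, and (ii) the subsequent traversal from $u$ (possibly routed through $y$) to $z_x$ never visits $x$ while $A_1$ remains stationary at $x$.

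For (i), I would trace each execution path that can deliver $A_2$ to $u$ as identified in Lemma \ref{lem:4_2}: the nested \textsc{Agent\_Found} calls in Cases 3a and 3c, or the \textsc{Search\_Agent\_Red} chain in Cases 3b and 3d. In the former, line \ref{agentfound:7} invokes \textsc{Notify\_Agent}$(u,e_1)$ once a coloured neighbour is found with no agent nearby; in the latter, $A_2$ first retreats one step to the previous node $v$ and then calls \textsc{Notify\_Agent}$(v,p)$ at line \ref{searchagentred:7}. By inspection of \textsc{Notify\_Agent}, the subroutine returns only when the flag is raised, which in turn requires $A_1$ to be sensed in the neighbourhood. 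Hence $A_2$ can only invoke the next routine (either \textsc{Move\_to\_Z\_Blue} on line \ref{agentfound:8}, or \textsc{Return\_Back} followed by \textsc{Move\_to\_Z\_Red} on lines \ref{searchagentred:7}--\ref{searchagentred:8}) after having confirmed $A_1$ is awake.

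For (ii), I would split according to the pebble placement cases in Section \ref{sec:4_1}. When $x$ is adjacent to some node of $P$, $A_2$ runs \textsc{Move\_to\_Z\_Blue} from $u$; the colouring rule---blue pebble at $v_i \in P \setminus \{z_x\}$ when either $v_i$ is not adjacent to $x$ or the port on $(v_i, v_{i+1})$ is smaller than the port on $(v_i, x)$, and red otherwise---ensures that the lexicographically first coloured neighbour of $v_i$ is always the next node of $P$ rather than $x$. Thus $A_2$ stays on $P$ and halts at the red pebble of $z_x$. When $x$ is not adjacent to any node of $P$, $A_2$ uses the stack replay of \textsc{Return\_Back} to retrace its earlier search back to $y$, then calls \textsc{Move\_to\_Z\_Red}, which follows blue pebbles along $P \setminus \{y\}$ in subcase 3(b)(ii)(A), or along $P_{yu} \cup P_{uz_x}$ in subcase 3(b)(ii)(B), until it exhausts its coloured neighbourhood at $z_x$. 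In every subcase the coloured nodes exclude $x$, so $A_2$ never steps on $x$; meanwhile $A_1$, having left the neighbourhood of $A_2$ in Phase 2 and executed the $7n$-round wait at line \ref{bfsline2} of \textsc{PortLabeledBFS}, remains at $x$ throughout this phase, so collisions are impossible.

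The main difficulty will be the bookkeeping for subcase 3(b)(ii)(B), where $P$ and $P_1$ share a nontrivial prefix $P_{yu}$ and the colouring alternates blue--red--blue across the three segments $P_{yu}$, $P_{ux}$, $P_{uz_x}$. I will need to verify at the junction vertex (distinct from the $u$ of the lemma statement) that the alternation between \textsc{Search\_Agent\_Blue} and \textsc{Search\_Agent\_Red} routes $A_2$ onto $P_{uz_x}$ rather than onto $P_{ux}$, and that the stack accumulated during \textsc{Search\_Agent\_Red} on $P_{ux}$ correctly replays through \textsc{Return\_Back} to land $A_2$ back at $y$ before the call to \textsc{Move\_to\_Z\_Red}, all within the $7n$-round budget that $A_1$ honours at $x$.
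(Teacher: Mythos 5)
Your part (i) is essentially the paper's own proof: it splits into the same two cases according to whether $A_2$ reached $u$ via \textsc{Agent\_Found} or via \textsc{Search\_Agent\_Red}, identifies the node (not adjacent to $x$) from which \textsc{Notify\_Agent} is invoked, and observes that this subroutine returns only once $A_1$ is sensed, so \textsc{Move\_to\_Z\_Blue} or \textsc{Return\_Back}/\textsc{Move\_to\_Z\_Red} begin only after $A_2$ knows $A_1$ is awake. Your part (ii) is correct in outline but is material the paper defers to Lemma \ref{lem:4_5}, so for this lemma your argument coincides with the paper's.
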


\begin{proof}
    According to lines \ref{initline1} and \ref{initline2} of Algorithm \ref{alg: init} and lines \ref{blue1} and \ref{blue2} of subroutine \textsc{Search\_Agent\_Blue}, $A_2$ reaches node $u$ by either executing subroutine \textsc{Agent\_Found} or executing subroutine \textsc{Search\_Agent\_Red}. Based on the scenario of how $A_2$ reaches $u$, we consider the following two cases.
    \begin{enumerate}
        \item $A_2$ reaches node $u$ while executing subroutine \textsc{Agent\_Found}. In this case, $x$ is a neighbor of some nodes on the path $P$. By definition of $u$, the next node $v$ on path $P$ is not a neighbor of $x$. The agent first finds $v$ then periodically visit $u$ from $v$ in order to let $A_1$ know that $A_2$ is already waked up. This is achieved by steps \ref{agentfound:1}-\ref{agentfound:2} of subroutine \textsc{Agent\_Found} where $A_2$ visits neighbors of $u$ in increasing port numbers if $u$ has a blue pebble and in decreasing port numbers if $u$ has a red pebble until it finds a node with a red or blue pebble. Hence, according to line \ref{agentfound:5}-\ref{agentfound:7}, $A_2$ calls subroutine \textsc{Notify\_Agent} from node $v$.

        Let $r$ be round when $A_2$ first calls subroutine \textsc{Notify\_Agent} from node $v$. According to steps \ref{notifyagent:1}-\ref{notifyagent:2} of subroutine \textsc{Notify\_Agent}, $A_2$ waits at $v$ for either $A_1$ to arrive at a node $u'$ (neighbor of $v$) or for $2n$ rounds whichever occurs first. If $A_1$ arrives at $u$ within $2n$ rounds, then $A_2$ knows that $A_1$ has awakened and ends subroutine \textsc{Notify\_Agent}. If $A_1$ does not arrive within $2n$ rounds, then according to steps \ref{notifyagent:7}-\ref{notifyagent:8}, $A_2$ moves to $u$ in the round $r+2n+1$ and moves back to $v$ in round $r+2n+2$ and repeats the process until $A_1$ arrives at $u'$. Hence, no collision occurs between the two agents.

        Once $A_2$ ends the execution of subroutine \textsc{Notify\_Agent}, it starts executing subroutine \textsc{Move\_to\_Z\_Blue}, if $v$ has a blue pebble so that $A_2$ can move to node $z_x$ or terminates if $v$ has a red pebble, showing that $A_2$ has reached node $z_x$. Hence, we can say that $A_2$ moves towards $z_x$ only after it knows $A_1$ has awakened.

        \item Agent $A_2$ finds node $u$ while executing subroutine\\ \textsc{Search\_Agent\_Red}. According to steps \ref{searchagentred:3}-\ref{searchagentred:4} of subroutine \textsc{Search\_Agent\_Red}, $A_2$ moves back to parent node $v$ of $u$. By definition of $u$, node $v$ is not a neighbor of $x$. Hence, according to line \ref{searchagentred:7}, $A_2$ calls subroutine \textsc{Notify\_Agent} from node $v$.

        Let $r$ be round when $A_2$ first calls subroutine \textsc{Notify\_Agent} from node $v$. According to steps \ref{notifyagent:1}-\ref{notifyagent:2} of subroutine \textsc{Notify\_Agent}, $A_2$ waits at $v$ for either $A_1$ to arrive at a node $u'$ (neighbor of $v$) or for $2n$ rounds whichever occurs first. If $A_1$ arrives at $u$ within $2n$ rounds, then $A_2$ knows that $A_1$ has awakened and ends subroutine \textsc{Notify\_Agent}. If $A_1$ does not arrive within $2n$ rounds, then according to steps \ref{notifyagent:7}-\ref{notifyagent:8}, $A_2$ moves to $u$ in the round $r+2n+1$ and moves back to $v$ in round $r+2n+2$ and repeats the process until $A_1$ arrives at $u'$. Hence, no collision occurs between the two agents.

        Once $A_2$ ends the execution of subroutine \textsc{Notify\_Agent}, it starts executing subroutine \textsc{Return\_Back} so that $A_2$ can return to node $y$ and then calls subroutine \textsc{Move\_to\_Z\_Red} to move to node $z_x$. Hence, we can say that $A_2$ moves towards $z_x$ only after it knows $A_1$ has awakened.
    \end{enumerate} 
\end{proof}

\begin{lemma}\label{lem:4_5}
    If $deg(y) < n-1$ and $deg(x)<n-1$, then by following Algorithm \ref{alg: init}, there is no collision between the agents when $A_2$ moves from $u$ to $z_x$.
\end{lemma}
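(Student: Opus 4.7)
The plan is to leverage Lemma \ref{lem:4_4}: by the time $A_2$ begins traversing from (near) $u$ toward $z_x$, it has completed \textsc{Notify\_Agent}, and correspondingly $A_1$---running \textsc{PortLabeledBFS}---has ended its own notification phase, returned to $s=x$, and entered the $7n$-round wait at line \ref{bfsline2} before initiating Phase 4. Hence throughout Phase 3, $A_1$ sits stationary at $x$, and a collision is possible only if $A_2$ itself steps onto $x$. The problem therefore reduces to verifying that the trajectory prescribed by \textsc{Move\_to\_Z\_Blue}, \textsc{Return\_Back}, and \textsc{Move\_to\_Z\_Red} stays inside $(P \cup P_1) \setminus \{x\}$.

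The analysis splits according to the pebble placement cases of Section \ref{sec:4_1}. When $x$ is a neighbor of some node on $P$ (case 3(b)(i)), $A_2$ is, by the proof of Lemma \ref{lem:4_4}, at the successor $v$ of $u$ on $P$; by the definition of $u$ as the neighbor of $x$ on $P$ \emph{closest to $z_x$}, $v$ itself has no edge to $x$. Now \textsc{Move\_to\_Z\_Blue} follows blue pebbles placed precisely on the successors of $u$ along $P$, with a terminating red pebble at $z_x$, and no node of $P$ strictly between $u$ and $z_x$ is adjacent to $x$. Since $x \notin P$ by the convention used when selecting the orientation at the beginning of Section \ref{sec:4_1}, the whole $v$-to-$z_x$ traversal lies in $V \setminus \{x\}$.

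When $x$ is not a neighbor of any node on $P$ (case 3(b)(ii)), $A_2$ first runs \textsc{Return\_Back} using the port labels it stacked during its outward traversal along the red-pebbled $P_1 \setminus \{x\}$, returning either to $y$ (subcase $P \cap P_1 = \{y\}$) or to the deepest common node $u^* \in P \cap P_1$ (subcase $|P \cap P_1| > 1$). It then calls \textsc{Move\_to\_Z\_Red}, which follows the blue pebbles laid on the $P$-segment originating at $y$ or $u^*$ and halts upon finding a neighborhood containing no blue pebble---which identifies $z_x$. Since no node on $P$ is adjacent to $x$ in this case, none of the pebble-guided local moves can steer $A_2$ onto $x$; and in the degenerate case $y = z_x$ (case 3(a)), $A_2$ merely backtracks along $P_1 \setminus \{x\}$ to $y$ and halts there.

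The technical content is thus a case-by-case check that the local pebble-guided choices of $A_2$ at each intermediate node really do reproduce the intended global path inside $P \cup P_1$ and never divert to $x$; this is exactly the guarantee baked into the pebble placement strategy of Section \ref{sec:4_1}. Combined with the stationarity of $A_1$ at $x$ during all of Phase 3, this rules out a collision. The main obstacle I anticipate is the bookkeeping when $|P \cap P_1| > 1$, where the transition from \textsc{Return\_Back} to \textsc{Move\_to\_Z\_Red} occurs at $u^*$ rather than at $y$, and one must verify that at every node $v$ on the traversed $P$-segment the unique neighbor carrying the next blue pebble is indeed the intended $P$-successor of $v$---and never a shortcut neighbor equal to $x$. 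Once this local-to-global consistency is confirmed in each subcase, the conclusion of Lemma \ref{lem:4_5} follows.
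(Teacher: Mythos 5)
There is a genuine gap. Your argument rests on the assertion that $A_1$ ``sits stationary at $x$'' throughout Phase~3, but the algorithm only guarantees this for $7n$ rounds: after \textsc{PortLabeledBFS} finishes its notification step, $A_1$ waits exactly $7n$ rounds at $x$ (line~\ref{bfsline2}) and then starts the BFS traversal of Phase~4, regardless of whether $A_2$ has finished. So the collision-freedom of Phase~3 is conditional on a quantitative claim you never establish: that $A_2$'s pebble-guided movement from $u$ to $z_x$ terminates within $7n$ rounds. This is not automatic, because the traversal is not a walk along $P$ of length $|P|$ --- at every node of $P$ (or of the return segment) $A_2$ probes its neighbors one by one, going out and back, until it finds the next red/blue pebble, so the elapsed time is governed by the total degree along the path, not its length. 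The paper's proof spends most of its effort exactly here: using that $P$ is a shortest path, every node off $P$ has at most $3$ neighbors on $P$, so the probing traverses at most $3(n-a)$ off-path edges twice plus the $a-1$ path edges once, giving at most $6n-5a+1<6n$ rounds from $y$ to $z_x$, plus at most $n$ rounds for \textsc{Return\_Back} (which retraces stored ports, hence takes $dist(u,y)\le n$), for a total under $7n$. Without this counting step your reduction ``collision only if $A_2$ steps onto $x$'' is unsound, since $A_1$ may already be moving through the graph while $A_2$ is still en route.

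The part of your argument that is sound --- that the pebble placement (color chosen according to whether the port toward the next path node precedes the port toward $x$) prevents $A_2$ from probing $x$ before finding the next pebbled node, and that $x$ lies on neither the chosen $P$ nor the traversed return segment --- matches the paper's first paragraph, though the paper states it briefly rather than by the subcase analysis you sketch. To repair your proof, keep that part and add the $7n$-round time bound on Phase~3 (the $6n$ bound from $y$ to $z_x$ via the degree-counting argument on shortest paths, plus the $n$ bound for the stack-based return from $u$ to $y$); only then does the stationarity of $A_1$ hold for the whole of $A_2$'s movement and the lemma follow.
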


\begin{proof}
    The agent $A_2$ starts moving from $u$ to $z_x$ only after learning that $A_1$ is awake. Now, $A_2$ moves to $z_x$ either directly from $u$ by executing subroutine \textsc{Move\_To\_Z\_Blue}, or it moves from $u$ to $y$ using subroutine \textsc{Return\_Back} and then from $y$ to $z_x$ using subroutine \textsc{Move\_To\_Z\_Red}. In both cases, in the path from $u$ to $z_x$, $x$ is not present. If $x$ is a neighbor of some node in the path $P$ or $P_1$, our pebble placement strategy and the corresponding movement algorithm guarantees that while moving along $P$ (or $P_1$), from a node on $P$ (or $P_1$), the next node in the path, where a pebble of color blue or red is placed will be visited before $x$. On the other hand, once $A_2$ starts its movement from $u$ to $z_x$, $A_1$ waits at $x$ for $7n$ rounds. Therefore, it is enough to show that $A_2$ reaches $z_x$ within $7n$ rounds once its starts its movement from $u$. In order to prove this, we show that at most $6n$ round is needed to reach from $y$ to $z_x$ and at most $n$ round is needed to reach from $u$ to $y$. 

    Let the number of nodes on path $P$ be $a$. Then the number of nodes not on $P = n-a$. For any node $w\notin P$, $w$ is a neighbor of at most 3 nodes of $P$. Therefore, for every node $w\notin P$, there are at most 3 edges connecting it to nodes in $P$. Hence, for $n-a$ nodes not in $P$, there are at most $3(n-a)$ edges connecting it to nodes on $P$. If the agent at node $v_i\in P$ visits its neighbor $w\notin P$, then the edge $(v_i,w)$ is traversed twice, once to visit $w$ and once to return to $v_i$. Hence, at most $3(n-a)$ edges not in $P$ are traversed at most $6(n-a)$ times. If agent at node $v_i\in P$ visits its neighbor $v_{i+1}\in V_1$, then the edge $(v_i,v_{i+1})$ is traversed only once to visit $v_{i+1}$. Hence, $a-1$ edges in $P$ are traversed $a-1$ times. Therefore, to reach from $y$ to $z_x$ total time needed is at most $6n-5a+1<6n$. On the other hand, since $u$ is reached from $y$ by the agent $A_2$, it already knows its path back to $y$ (by storing the incoming ports while traveling from $y$ to $u$). Hence, it can reach from $u$ to $y$ in time $dist(u,y)$ which is at most $n$. This completes the proof of the lemma.
\end{proof}

The next lemma shows that once $A_1$ starts its movement after waiting $7n$ rounds, it successfully reaches to $z'_x$ without any collision with $A_2$.
\begin{lemma}\label{lem:4_7}
    If $deg(y) < n-1$ and $deg(x)<n-1$, then the agent $A_1$ without any collision identifies the fact that $A_2$ is awake and then successfully reaches to $z'_x$ without any collision.
\end{lemma}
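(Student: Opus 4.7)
The plan is to split the argument into two parts: first, that $A_1$ safely identifies $A_2$'s wakefulness while stationed at $x$; second, that $A_1$'s subsequent BFS exploration reaches $z'_x$ without collision with $A_2$.

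Since $\deg(x) < n-1$, the pebble placement of Section \ref{sec:4_1} places no pebble at $x$, so $A_1$ upon waking invokes \textsc{PortLabeledBFS}$(x)$ via Algorithm \ref{alg: init}. By Lemmas \ref{lem:4_2} and \ref{lem:4_4}, $A_2$ eventually reaches a node $v$ at distance $2$ from $x$ via a neighbor $u$ of $x$, and then executes \textsc{Notify\_Agent}$(v,p)$, which holds $A_2$ at $v$ for $2n$ consecutive rounds between brief one-round visits to $u$. My plan is to argue three claims in sequence: (a) whenever $A_1$ observes ``agent in neighborhood'' followed by an empty neighborhood after its built-in $2$-round wait, $A_2$ must be awake, because only a moving agent can leave $A_1$'s neighborhood; (b) at the moment of such a post-wait empty observation $A_2$ is stationed at $v$ (having just returned from $u$, or having just entered $v$ for the first time); and (c) $A_1$'s subsequent scan of the neighbors of $x$ takes at most $2\deg(x) \le 2(n-2) < 2n$ rounds, strictly less than the remaining waiting window at $v$ before $A_2$'s next brief visit to $u$. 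Combining these, $A_2$ stays at $v$ throughout $A_1$'s scan, so no collision arises at $x$, at any visited neighbor of $x$, or at $v$. When $A_1$ visits a neighbor $u'$ of $x$ adjacent to $v$ (at least $u' = u$ qualifies), $A_1$ detects $A_2$ while symmetrically $A_2$ at $v$ sees $A_1$ in its neighborhood and sets $flag=1$ in \textsc{Notify\_Agent}, exiting the subroutine. Both agents are now synchronized.

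After detection, $A_1$ returns to $x$ and waits $7n$ rounds. By Lemma \ref{lem:4_5}, within this window $A_2$ moves collision-free from $v$ (or $u$) to $z_x$ and stops there. Then $A_1$ explores all port sequences of length $i = 1, 2, \ldots$ in lex order, i.e., $AnonymousBFS(x)$, until encountering the unique green pebble, placed at $z'_x$. I would then establish collision-freeness here by two observations: (i) $A_1$ finds the green pebble at $z'_x$ strictly before reaching $z_x$, since $z_x = f_n$ is the BFS-last node while $z'_x = f_{n'}$ with $n' < n$ (because $z_x \in V_1 \setminus V_3$); and (ii) no walk of length at most $d(x, z'_x) \le d(x, z_x)$ can pass through $z_x$ as an intermediate node, because any intermediate node of a length-$k$ walk lies at graph distance at most $k$ from $x$, while $z_x$ lies at distance exactly $d(x, z_x)$. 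Hence $A_1$ never visits $z_x$ during the BFS, and no collision with the stationary $A_2$ at $z_x$ can occur. Full coverage of $G$ then follows: $V_3$ is traversed by $A_1$ during the BFS up to $z'_x$, and the nodes on $P \cup P_1$ are traversed by $A_2$ during its phase-$1$ and phase-$3$ movements.

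The main obstacle is the timing analysis of the first part: showing uniformly across all subcases of the pebble placement (cases 3.a, 3.b.i, 3.b.ii.A, and 3.b.ii.B) that $A_1$'s post-wait scan always aligns with $A_2$'s waiting window at $v$ rather than its brief visit to $u$. This requires proving that whenever $A_1$ first observes a non-empty neighborhood, whether during $A_2$'s phase-$1$ transit through neighbors of $x$ on the path, or during a subsequent periodic visit of $A_2$ to $u$ from $v$, the ensuing $2$-round wait leaves $A_2$ at $v$ with at least $2n - O(1)$ fresh waiting rounds remaining before its next excursion to $u$, comfortably exceeding the $2(n-2)$ rounds needed by $A_1$ to complete its scan.
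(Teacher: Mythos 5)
Your proposal follows essentially the same route as the paper's own proof: since no red/blue pebble is at $x$, $A_1$ runs \textsc{PortLabeledBFS}, learns $A_2$ is awake by observing it arrive in and then leave its neighborhood, scans the neighbors of $x$ to notify $A_2$, waits $7n$ rounds so that (via Lemma \ref{lem:4_5}) $A_2$ is already parked at $z_x$, and then follows $AnonymousBFS(x)$ until the green pebble at $z'_x$, which by definition is first-visited before $z_x$, so $z_x$ is never reached and no collision occurs. The fine-grained timing alignment you single out as the main obstacle (and your auxiliary distance observation (ii)) actually goes beyond the published argument, which handles that step by simply asserting that $A_2$ is two hops away at $v$ during \textsc{Notify\_Agent} while $A_1$ scans the neighbors of $x$; so your plan is, if anything, more careful than the paper's proof of the same steps.
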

    
\begin{proof}
    According to Algorithm \ref{alg: init}, if no pebble is found at the initial node, the agent calls subroutine \textsc{PortLabeledBFS}. 

    There are two major steps executed by subroutine \textsc{PortLabeledBFS}. In the first step, $A_1$, after waking up, identifies whether $A_2$ is in the neighborhood and $A_2$ is active. In the second step, after waiting for $7n$ rounds, it explores the nodes in $G$ using BFS traversal until a node with a green pebble is found. 
 
    We start by showing no collision between the agents in the first step. Based on what the agent $A_1$ observes after waking up, we consider the following two cases.
 
    \begin{enumerate}
        \item If $A_2$ is in the neighborhood of $A_1$, then according to lines \ref{bfs:3}-\ref{bfs:8} of \textsc{PortLabeledBFS}, $A_1$ waits at $x$ for 2 rounds and then checks if $A_2$ is still in the neighborhood. If $A_2$ is in the neighborhood, then $A_1$ continues to wait at $x$ for 2 rounds until $A_2$ is not in the neighborhood. Once $A_2$ leaves the neighborhood, $A_1$ knows that $A_2$ has awakened and $A_2$ is 2 distance apart from $A_1$. It then visits the neighborhood of $A_2$ to inform about its awakened state. Hence it does not collide with $A_2$.
 
        \item If $A_2$ is not in the neighborhood of $A_1$, then according to lines \ref{bfs:4}-\ref{bfs:2} of \textsc{PortLabeledBFS}, $A_1$ waits at $x$ until $A_2$ arrives in the neighborhood to inform $A_1$ about its awakened state. $A_1$ then waits at $x$ for 2 rounds so that $A_2$ leaves the neighborhood of $x$. Hence, $A_2$ is 2 distance apart from $A_1$. It then visits the neighborhood of $A_2$ to inform about its awakened state. Hence it does not collide with the other agent.
    \end{enumerate}

    Once $A_1$ knows $A_2$ is awakened and informs $A_2$ about its awakened state, it waits for $7n$ rounds. As mentioned in the proof of lemma \ref{lem:4_5}, the agent $A_2$ reaches $z_x$ within round $7n$. Hence, it is enough to prove that $A_2$ never visits $z_x$ before it terminates.
 
    As per the pebble placement strategy, a green pebble is placed at $z'_x$ and $A_1$ continues visiting the nodes of $G$ according to BFS until it visits a node with a green pebble. As per the definition of $z'_x$ and $z_x$, the BFS traversal from $x$ visits $z'_x$ before $z_x$. Since $A_1$ terminates at $z'_x$, it does not visit $z_x$ and hence no collision occurs.
\end{proof}

Finally, we are ready to prove the final theorem of this section that gives the correctness of the proposed algorithm. 
\begin{theorem}
 By the time when both agents terminate, each node $v \in G$ is visited by at least one agent and no collision happens between the two agents.
\end{theorem}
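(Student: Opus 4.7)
The plan is to prove the theorem by a case analysis on the degrees of $x$ and $y$, assembling the lemmas already established in this section. The first step is to dispatch the boundary regime where either $deg(y)=n-1$ or $deg(x)=n-1$ by direct appeal to Lemma \ref{lem:4_3}, which already packages both collision-freeness and full coverage for that case: the pebble-equipped agent explores all but one neighbor via \textsc{One\_Level\_BFS} or \textsc{One\_Level\_BFS\_Color}, while the other agent finds a green pebble at its start node and terminates immediately.

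The bulk of the proof addresses the generic case $deg(y) < n-1$ and $deg(x) < n-1$. Here I would chain Lemmas \ref{lem:4_2}, \ref{lem:4_4}, \ref{lem:4_5}, and \ref{lem:4_7} in order: $A_2$ migrates from $y$ to the designated node $u$ without collision; $A_2$ only initiates the subsequent leg toward $z_x$ after confirming through \textsc{Notify\_Agent} that $A_1$ is awake; this final leg of $A_2$'s trajectory is collision-free and terminates within $7n$ rounds; and $A_1$, having waited exactly $7n$ rounds at $x$ after identifying $A_2$'s wakefulness, then executes $AnonymousBFS(x)$ up to the green pebble at $z'_x$ without colliding with the now-stationary $A_2$ at $z_x$. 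Combined, these give the no-collision half of the theorem.

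For full node coverage, I would argue via the structural decomposition $V = V_1 \cup W_1 \cup V_3$ built in the preliminaries: every node in the path $P = V_1$ from $y$ to $z_x$ is visited by $A_2$ during its guided traversal along the red/blue pebbles; every node in the path $W_1$ from $y$ to $x$ is visited when $A_2$ executes \textsc{Return\_Back} (or is covered when case 3(b)(ii) of the pebble placement places the agent through the detour via $u$); and by the defining property of $z'_x$ as the last $V_3$-node visited by $AnonymousBFS(x)$, every node of $V_3$ has been visited by $A_1$ before it halts at $z'_x$. Since $V_1 \cup W_1 \cup V_3 = V$, every node of $G$ is visited by at least one agent.

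The principal obstacle is making the coverage bookkeeping watertight across all the pebble-placement sub-cases 3(a), 3(b)(i), and 3(b)(ii): the precise trajectory of $A_2$ forks depending on whether $x$ is a neighbor of some node on $P$ (so $A_2$ ends up using \textsc{Move\_to\_Z\_Blue}) or not (so $A_2$ uses \textsc{Return\_Back} followed by \textsc{Move\_to\_Z\_Red}). I would verify branch by branch that the union of $A_2$'s route with the BFS frontier visited by $A_1$ up to $z'_x$ actually equals $V$, relying on the pebble-placement specification of Section \ref{sec:4_1} rather than re-tracing the movement routines. I would not reprove the within-$7n$-round timing bound, since that is already extracted in the proof of Lemma \ref{lem:4_5}, and would instead cite it to close the synchronization gap between the two agents' phases.
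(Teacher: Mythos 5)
Your proposal follows essentially the same route as the paper's own proof: a case split on whether $deg(x)$ or $deg(y)$ equals $n-1$ (handled by Lemma \ref{lem:4_3}), collision-freeness in the generic case from Lemmas \ref{lem:4_2}, \ref{lem:4_4}, \ref{lem:4_5}, and \ref{lem:4_7}, and coverage from the decomposition of $V$ into the nodes of $P$, $P_1$, and the remaining nodes $V_3$, using the defining property of $z'_x$ as the last $V_3$-node reached by $AnonymousBFS(x)$. Your additional remark about checking the pebble-placement sub-cases branch by branch only makes explicit what the paper's terser argument leaves implicit, so the approaches coincide.
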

    
\begin{proof}

 Based on the degrees of the nodes $x$ and $y$, we consider the following two cases.
 \begin{enumerate}
 \item Suppose $deg(y) = n-1$ or $deg(x) = n-1$, then according to lemma \ref{lem:4_3}, each node $v \in G$ is visited by at least one agent and no collision happens between the two agents.

 \item Suppose $deg(y) < n-1$ and $deg(x) < n-1$, then $A_2$ visits the nodes on the path $P$ and $P_1$. Let $V'$ be the set of nodes on either $P$ or $P_1$ except $x$. By definition of $z'_x$, $z'_x$ is the node that is visited last among all the nodes that are neither in $P$ nor in $P_1$. Hence, before $A_1$ finds a green pebble at $z'_x$, it visits all the nodes that are not on $P$ or $P_1$. Therefore, each node $v \in G$ is visited by at least one agent, and from the lemmas \ref{lem:4_5} and \ref{lem:4_7}, no collision happens between the two agents.
 \end{enumerate}
\end{proof}

\subsection{Polynomial exploration for Bipartite Graphs}

The algorithm proposed in section \ref{sec:4} takes $O(\Delta^D)$ time before both agents terminate and guarantees collision-free exploration with color index 3. In this section, we show that if the input graph is bipartite, then the previous algorithm can be modified to guarantee collision-free exploration in polynomial time with color index 2. In fact, in the case of a bipartite graph, the proposed exploration algorithm achieves a stronger version of the exploration problem where each node is required to be visited by both agents.

Before we describe the details of the algorithm, we give a high level idea of the same. The exploration part of the algorithm is based on the universal exploration sequence \cite{KOUCKY2002}. From the result in \cite{Reingold2008}, if the agent knows $n$, it is possible to compute a sequence of port numbers of length polynomial in $n$ such that if the agent follows this sequence of port numbers, then each node of any graph of size $n$ can be visited at least once. We call this sequence $ExploreUniversal(n)$. 

The modified algorithm runs in three phases. The first two phases are the same as before, where the objectives are to move one agent to the neighborhood of the other agent, and both agents learn once the other agent has woken up. Once these two phases are done, in the third phase, both agents start exploring the graph at the same time according to a universal exploration sequence and terminate once the exploration is completed. Since the graph is bipartite and the agents are at a distance 1 when they start the exploration in phase 3, and both agents always move along an edge before they terminate, they never collide and every node in the graph is explored. The details of the pebble placement and the exploration algorithm are described below.


{\bf Pebble placement:} Let $x$ and $y$ be the initial positions of the agents such that $deg(x)\ge deg(y)$. 
\begin{enumerate}
 \item If $deg(x) = n-1$ and $deg(y) = 1$, and $y$ is connected to $x$ through port 0, then place a red pebble at $x$. Otherwise, place a black pebble at $x$.
 \item If $deg(x) \le n-1$ and $deg(y) \le n-1$, then consider the following cases.
 \begin{enumerate}
 \item If $dist(x,y)>1$, and $u$ is an adjacent node of $y$ on the shortest path from $x$ to $y$, then for every node $v$ on the shortest path from $x$ to $u$, place a black pebble.
 \item If $dist(x,y)=1$, and $y$ is connected to $x$ through port 0, then a red pebble is placed at $x$. Else a black pebble is placed at $x$.
 \end{enumerate}
\end{enumerate}

{\bf Algorithm:} The algorithm for the agents is described as follows:

\begin{enumerate}
 \item If an agent after waking up sees no pebble, then
 \begin{enumerate}
 \item If the agent finds the other agent in its neighborhood, it waits until the other agent has moved out of its neighborhood. Execute step \ref{step1.1}.
 \item Otherwise, it waits until the other agent arrives in the neighborhood. Once the other agent arrives in the neighborhood, execute step \ref{step1.1}. 
 \item \label{step1.1} It visits its neighbors until it reaches the node $u$ such that the other agent is in the neighborhood of $u$. Then it returns to node $y$ and executes $ExploreUniversal(n)$.
 \end{enumerate}

 \item If the agent, after waking up, finds a pebble, then
 \begin{enumerate}
 \item \label{step2.1} If the other agent is not in the neighborhood, then execute step \ref{step2.2}. Else execute step \ref{step2.3}.
 \item \label{step2.2} It visits its neighbors one by one in increasing order of port numbers, starting from port 0 until it finds a node with a black pebble. Go to step \ref{step2.1}.
 \item \label{step2.3} Let the current node be $u$. If $u\ne x$, then move back to the previous node from which it arrived $u$ by executing step \ref{step2.2} and then execute the following steps. Otherwise, execute step \ref{step2.4}.
 \begin{enumerate}
 \item \label{st1.1} Stay in the current node for $2n$ rounds or until the other agent comes in the neighborhood, whichever occurs first.
 \item \label{st1.2} If the other agent arrives in the neighborhood within this period, move back to node $u$ and start executing $ExploreUniversal(n)$. Otherwise, move back to node $u$ and execute step \ref{step2.3}.
 \end{enumerate}
 \item \label{step2.4} If the agent finds a black pebble at $x$, then it moves to its neighbor which is connected to $x$ by port 0, else if it finds a red pebble at $x$, then it moves to its neighbor which is connected to $x$ by port 1. Then it executes steps \ref{st1.1} and \ref{st1.2}.
 \end{enumerate}
\end{enumerate}

\begin{theorem}
 The proposed algorithm guarantees collision-free exploration in polynomial time in bipartite graphs.
\end{theorem}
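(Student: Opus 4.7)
The plan is to establish three facts in sequence: (a) the pebble placement uses only two colors and enables the agents to move to adjacent positions and mutually detect awakenedness without collision; (b) once both agents start $ExploreUniversal(n)$ from adjacent nodes in the same local round, no collision occurs and every node is visited by both agents; (c) the total round count is polynomial in $n$.

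For (a), I would reuse the structure already developed in the proofs of Lemmas \ref{lem:4_2} and \ref{lem:4_4}, simplified because the shortest path from $x$ to $y$ is the only relevant corridor. The non-initiator sees no pebble and therefore stays at its starting node (or waits out the intruder's visit) until the initiator has reached its neighborhood; hence no collision can occur during the initiator's walk. The initiator, on each node of the black-pebbled path, scans its neighbors in increasing port order until it finds the next black pebble, so it follows the corridor without ambiguity. The only delicate moment is the final hop out of $x$: the pebble color at $x$ (red or black) encodes which port leads directly to the partner, and the algorithm uses port $0$ if the pebble is black and port $1$ if red, so the initiator lands on a neighbor of the partner that is not the partner itself. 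The standard notify gadget (wait up to $2n$ rounds at the adjacent node; if the partner does not appear, bounce back to the port-$0$ or port-$1$ neighbor and retry) lets both agents learn each other's awakened state; its correctness is the same synchronisation argument used in the general-graph proof.

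The heart of the proof is (b). When phase 3 begins, both agents share a common local round counter, they are at adjacent nodes $a,b$, and they both move in every subsequent round according to $ExploreUniversal(n)$. Fix the bipartition $V=L\cup R$ with $a\in L$ and $b\in R$. Any move along an edge switches sides of the bipartition, so after $k$ rounds of phase 3 the first agent lies in $L$ iff $k$ is even, while the second agent lies in $L$ iff $k$ is odd. The agents are therefore always on opposite sides of the bipartition, so they can never occupy the same vertex. If they traverse a common edge in the same round, they must do so from opposite endpoints, which by the model is explicitly not a collision. By the universality of the Reingold/Koucký sequence, each agent visits every node of $G$, so in particular every node is visited by at least one agent (in fact by both), giving the stronger exploration guarantee stated earlier.

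For (c), the length of $ExploreUniversal(n)$ is polynomial in $n$ by \cite{Reingold2008}; the initiator's walk along the at-most-$n$ black-pebbled nodes costs $O(n\Delta)=O(n^2)$ rounds in the worst case; and the notify gadget adds at most $O(n)$ rounds per wake-up gap. Summing these contributions yields an overall polynomial bound. The step I expect to be the main obstacle is (a): showing that only two colors suffice to unambiguously orient the initiator and, simultaneously, guarantee that the last hop at $x$ never places it on top of the partner; in contrast, the bipartite parity argument in (b) is short and robust once the phase-3 precondition (adjacent nodes, synchronised round counter, always move) is in place.
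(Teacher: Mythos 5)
Your proposal is correct and follows essentially the same route as the paper: phases 1--2 (corridor of pebbles, notify gadget) are shown collision-free and polynomial, and phase 3 is handled by the bipartite parity of two synchronously moving agents that start at adjacent nodes following $ExploreUniversal(n)$ --- your parity argument merely spells out what the paper asserts tersely. One small correction: the safety of the final hop at $x$ and of the non-initiator's neighborhood scan is not inherited from the general-graph synchronisation argument (there it came from the pebble-defined node $u$ having a non-neighbor of $x$ as successor); in the two-color bipartite algorithm it rests on the fact that in a bipartite graph a neighbor of $x$ cannot also be a neighbor of the adjacent node $y$ (and the initiator lands on a neighbor of its \emph{own} node $x$ other than $y$, not on ``a neighbor of the partner''), which is exactly the observation the paper's case analysis makes explicit.
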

    
\begin{proof}

The proposed algorithm has three major steps. The objective of the first step is to bring two agents at a distance 1 if they are at least two distances apart from each other. The objective of the second step is to learn a common round by which both agents are awake and then in the third step, both agents start following a universal traversal sequence of the same length and terminate at the same time. As the length of the universal traversal sequence is polynomial \cite{}, and the agents start from adjacent nodes of a bipartite graph, each node in the graph will be visited by both agents at least once and no collision will happen in the third step. Hence, it is enough to prove that the first two steps take polynomial time and no collision happens in these steps. Below, we consider the different scenarios of based on what the agent sees once wakes up.

\begin{itemize}
\item Suppose the agent, after waking up, finds a pebble at its initial node and the other agent in its neighborhood. In this case, the initial distance between the two agents is 1 and according to the pebble placement, this node is the node $x$. According to the algorithm, if the agent finds a black pebble at $x$, then $y$ is connected to node $x$ through port $p\ne 0$. In that case, the agent moves to its neighbor node $u$ using port 0 and waits for $2n$ rounds. According to the algorithm, if the agent finds a red pebble at $x$, then $y$ is connected to node $x$ through port 0. In that case, the agent moves to its neighbor node $u$ using port 1 and waits for $2n$ rounds. If the other agent at $y$ is awake at that time, then it visits its neighbor until it reaches a node that is a neighbor of node $u$. Since the graph is bipartite, any node that is a neighbor of node $x$ cannot be the neighbor of node $y$ if $dist(x,y) = 1$. Hence, no collision happens when the agent moves from $y$ to its adjacent nodes.

\item Suppose the agent, after waking up, finds a pebble at its initial node and the other agent is not in its neighborhood. In that case, the initial distance between the two agents is $>1$ and according to the pebble placement, this node is node $x$. According to the algorithm, the agent visits its neighbor in the increasing port number until it finds a node $v$ with a black pebble. If the other agent is not in the neighborhood of $v$, then the agent visits the neighbor of $v$ in the increasing port number until it finds a node with a black pebble. The agent continues this process until it finds a node $u$ with a black pebble and the other agent in the neighborhood of $u$. In this case, the agent moves to its previous node and waits for $2n$ rounds. If the other agent at $y$ is awake at that time, then it visits its neighbor until it reaches a node that is a neighbor of node $u$. Since the graph is bipartite, any node that is a neighbor of node $x$ cannot be the neighbor of node $y$ if $dist(x,y) = 1$. Hence, no collision happens when the agent moves from $y$ to its adjacent nodes.
\end{itemize}

In case the agents are at least two distances apart, the time taken by the agent at $x$ to reach a neighbor of $y$ takes $O(m)$ times where $m$ is the number of edges in the graph. Once the agents are in the neighborhood, within $2n$ rounds in case both agents are awake this time, both agents learn the other agent is awake. In case the agent of $y$ is still not awake, within $4n$ rounds after the agent at $y$ wakes up, both agents learn the other agent is awake. Hence, both step 1 and step 2 are executed in polynomial time without any collision. This completes the proof.

\end{proof}

\section{Conclusion}
In this paper, we have studied the collision-free exploration problem by two mobile agents with the help of efficient pebble placement. We show an impossibility result showing there is no algorithm that solves the problem for color index 1. We have proposed two algorithms: an algorithm for a general graph with color index 3, and an algorithm for a bipartite graph for color index 2. The second algorithm runs in polynomial time. Whether there exists an algorithm for pebble placement with color index two remains open. On the other hand, for color index 3, as the proposed algorithm runs in exponential time, it remains to investigate what is the fastest algorithm that guarantees collision-free exploration for the general graph.

\bibliography{ref}
\end{document}